\documentclass[12pt,a4paper]{article}
\usepackage[utf8]{inputenc}
\usepackage{amsmath} %
\usepackage{amsfonts} %
\usepackage{amssymb} %
\usepackage{graphicx} %
\usepackage{xcolor}
\usepackage{comment}
\usepackage{amsthm}
\usepackage[shortlabels]{enumitem}
\usepackage{mathrsfs}
\usepackage{hyperref} %
\usepackage{chngcntr}
\counterwithin{figure}{section}


\usepackage{definitions}
\newtheorem{theorem}{Theorem}

\newtheorem{lemma}{Lemma}
\theoremstyle{remark}

\newtheorem{definition}{Definition}

\newcommand\blfootnote[1]{%
  \begingroup
  \renewcommand\thefootnote{}\footnote{#1}%
  \addtocounter{footnote}{-1}%
  \endgroup
}

\newif\ifR %
\Rtrue 

\title{\bf {Optimal distributed composite testing in high-dimensional Gaussian models with 1-bit communication}}

 \author{Botond Szab\'o,\thanks{Department of Decision Sciences, Bocconi University, Bocconi Institute for Data Science and Analytics (BIDSA), \href{mailto:botond.szabo@unibocconi.it}{botond.szabo@unibocconi.it}} \and Lasse Vuursteen\thanks{Delft Institute of Applied Mathematics (DIAM), Delft University of Technology, \href{mailto:l.vuursteen@tudelft.nl}{l.vuursteen@tudelft.nl}} \and and Harry van Zanten\thanks{Department of Mathematics, Vrije Universiteit Amsterdam, \href{mailto:j.h.van.zanten@vu.nl}{j.h.van.zanten@vu.nl}}  } 

\date{}

\providecommand{\keywords}[1]
{
  \small	
  \textbf{\textit{Index terms---}} #1
}

\begin{document}

\maketitle

\begin{abstract}
In this paper we study the problem of signal detection in Gaussian noise
in a distributed setting where the local machines in the star topology can communicate a single bit of information. We derive a lower bound on the {Euclidian norm} that the 
signal needs to have in order to be detectable. Moreover, we 
exhibit  optimal distributed testing strategies that attain the lower bound.
\end{abstract}


\keywords{Testing, distributed algorithms, hypothesis testing, federated learning, minimax lower bounds, Gaussian noise.}

\section{Introduction}

\blfootnote{2022 IEEE. Personal use of this material is permitted.
  Permission from IEEE must be obtained for all other uses, in any current or future 
  media, including reprinting/republishing this material for advertising or promotional 
  purposes, creating new collective works, for resale or redistribution to servers or 
  lists, or reuse of any copyrighted component of this work in other works. 
  DOI: \href{https://doi.org/10.1109/TIT.2022.3150599}{10.1109/TIT.2022.3150599}}

The rapidly increasing amount of available data in many fields of application has triggered the development of distributed methods
for data analysis.  
Distributed methods, besides being able to speed up  computations considerably, 
can reduce local memory requirements  and  can also help in protecting  privacy, by 
refraining from storing a whole dataset in a single central location. 
Moreover, distributed methods occur naturally when data is by construction observed and processed 
at multiple locations, for instance in astronomy, meteorology, seismology, military radar or air traffic control systems.

{In the context  of decentralized detection, the task of distinguishing between different signals based on 
 information provided from a network of sensors has been investigated since the late seventies. 
 Motivated by applications such as surveillance systems and wireless communication, statistical hypothesis testing with distributed sensors has received considerable attention over the past few decades, e.g.\ \cite{tenney_detection_1981,tsitsiklis_decentralized_1988,kreidl_decentralized_2011-1,tarighati_decentralized_2017}. Most of this literature has been concerned with distinguishing 
 between finitely many signals, by combining the decisions of sensors or machines receiving (noisy) data from the same underlying signal \cite{varshney_distributed_1997, chamberland_wireless_2007}. The statistical hypotheses considered at the time where either simple, or reduce to a simple hypothesis. Under the name of multiterminal data compression, hypothesis testing and estimation with observations in fixed finite sample spaces (alphabets) was investigated \cite{ahlswede_hypothesis_1986,han_hypothesis_1987,han_exponential-type_1989,amari_fisher_1989-1,shalaby_multiterminal_1992,watanabe_neymanpearson_2018}. In this body of literature, each terminal observes a stream of iid observations in some alphabet, where the different terminals might receive observations from different distributions on different alphabets. Each of the terminals compresses the iid observations into a message that is sent to a central terminal, at which some inference goal is to be achieved based on the messages. For an overview, see \cite{te_sun_han_statistical_1998}.

The information theoretic aspects of distributed statistical methods have only
been studied rigorously in terms of sample complexity relatively recently. The problem of distributed testing with a composite alternative hypothesis about a high-dimensional parameter with limited communication was first 
considered in a few recent papers. In \cite{acharya:etal:2020}, theoretical guarantees are derived for distributed uniformity testing of a discrete distribution in the case that a collection of  machines each receive one observation. Here, sample complexity (in the sense of minimax convergence rate) is assessed in terms the cardinality of the sample space, the number of bits available for communication and the number of observations. In that paper, it is also shown that in the distributed setting with many local machines, 
testing performance can strictly improve when all machines have access to a shared source of 
randomness, a so-called {\em public coin}.
Distributed uniformity testing of a discrete distribution when multiple observations per machine are available is considered in \cite{pmlr-v99-diakonikolas19a, fischer:etal:2018}. 

Most of the work, studying distributed inference in terms of sample complexity, up until now has focused on 
distributed methods for estimating a signal in the normal-means model under bandwidth, or communication restrictions (see 
for instance \cite{zhang:2013, braverman:etal:2016,xu_information-theoretic_2017,barnes2020lower,cai_distributed_2020}). The canonical normal-means model postulates that we have an 
observation $X$ satisfying
\begin{align*}
X=\mu+\frac{1}{\sqrt{n}}Z,
\end{align*}
where $\mu\in\mathbb{R}^d$ is the unknown signal, $Z\sim N_d(0,I_d)$ is an unobserved noise vector
with a $d$-dimensional standard Gaussian distribution, and $n$ is the signal-to-noise ratio.
Note that this is equivalent to observing $n$ independent copies of a $N_d(\mu, I_d)$-vector. In the distributed setting considered in the aforementioned articles, $n$ such observations are distributed across $m$ machines, which then communicate a transcript to a central machine. The central machine then forms an estimate of $\mu$ on the basis of these transcripts. The sample complexity of such a problem is expressed in $n$, $m$, the number of bits available for communication of the transcripts and the properties of $\mu$ (e.g. Euclidian norm). Related to this normal-means model, on deriving minimax lower bounds and optimal distributed estimation strategies 
 in the context of nonparametric regression, density estimation and Gaussian signal-in-white-noise models (e.g.
\cite{szabo:zanten:2018,zhu:2018, barnes2020lower,han2018geometric, szabo:zanten:2020}).}

In this paper, we investigate the information theoretic properties of distributed methods for 
{\em testing} for the presence of a signal in the normal-means model. 
The theory on distributed testing in this setting is much less developed than that for estimation. Testing for the presence of a signal in the normal-means model translates to testing the null hypothesis $H_0: \mu = 0$
that the sequence is identically equal to $0$. 
Rejecting this hypothesis means declaring that there is a non-zero signal. 
A fundamental question is how
large the signal should be in order to be detectable. 
It is well known that in this non-distributed model, the size of the signal  ({by which we mean the Euclidean norm}) has to be of order $d^{1/4}/\sqrt{n}$ in order for the signal to be 
detectable (see e.g.\ \cite{baraud:2002}). 
An optimal test is for instance obtained by rejecting the hypothesis $H_0$
if $\|X\|^2$ is above a specific threshold, depending on $d$ and $n$ (see also Section \ref{sec: upper}).

The question we address in this paper is how this changes in the distributed setting. 
In our analysis we consider the distributed version of the normal-means model, 
 where the data ($n$ independent draws from the $N_d(\mu,I_d)$-distribution) is divided over $m \le n$ 
 local machines, or cores. Equivalently, we assume that at 
 each local machine $j\in\{1,...,m\}$ we observe a vector $X^{(j)}$ which satisfies
\begin{equation}\label{dynamics_Xj}
X^{(j)} = \mu + \sqrt{\frac{m}{n}} Z^{(j)},
\end{equation}
where again $\mu \in \mathbb{R}^d$ and the $Z^{(j)}$ are independent  $N_d(0,I_d)$-distributed vectors.
Each machine carries out a test for the hypothesis $H_0: \mu = 0$ using its local data $X^{(j)}$, 
where we allow that the machines use a public coin, i.e.\ a common random vector $U$ that is available to all local machines.
Subsequently, the outcomes of the $m$ local tests (which are single bits) are sent to a central machine, where 
they are combined into a single, overall test.
We prove that in this distributed setting, the size of the signal   
has to be of the order $(d (m \wedge d))^{1/4}/\sqrt{n}$ in order for the signal to be 
detectable. Moreover, we exhibit optimal distributed tests that achieve this detection bound.

The detection bound has a remarkable ``{regime change}'' or ``{elbow effect}'' at $m = d$. 
As $m$ grows from $1$ to $d$ the testing problem becomes more difficult, in the sense that 
the signal needs to be ever larger to be detectable using an increasing number of machines.
This is intuitively understandable, since as $m$ grows, the local signal-to-noise ratio 
decreases, so it is reasonable to expect that the signal needs to be larger to be able to detect it.
The detection bound stops increasing if $m$ grows above $d$ however. 
In that range, the decrease of the local signal-to-noise ratios is apparently balanced 
by  the increase of the number of bits that are transmitted from the local machines to the 
central one.

The {regime change} is also reflected by the fact that we need different testing strategies 
depending on how $m$ and $d$ are related. If $m$ is below  some threshold, and in particular
it does not increase with $n$, then, not surprisingly, we can simply  use the 
classical non-distributed test mentioned above at one of the local machines. If $m$ is larger than this 
threshold but  $m \le d$, then it is still possible to construct an optimal test 
by combining local tests that are based on the test statistics $\|X^{(j)}\|^2$, but the 
test needs to be constructed more carefully.
In the range  $m \ge d$ this strategy becomes sub-optimal and we have to 
adopt a different approach, using the fact 
that we have  a public coin at our disposal. For this case
we construct an optimal distributed test by appropriately combining local tests that use 
local test statistics of the form $U^\top X^{(j)}$, where $U$ is a public random vector. 

{ The approach to finding the lower bound can be sumarized as follows. As a first step, we lower bound the testing risk by a type of Bayes risk, where $\mu$ is drawn from a prior such that it either belongs to the null hypothesis of the alternative, as in for example \cite{ingster:2002}. This Bayes risk can be related to the mutual information between the testing outcome and which hypothesis is selected, akin to techniques common in tackling (distributed) estimation problems through Fano-like  inequalities in for example \cite{zhang:2013, chen_on_bayes_risk_lower_bounds}. In particular, the tensorization property of the mutual information is used and combined with a so called strong data processing inequality to quantify the loss incurred in the distributed setup, similar to the approaches to distributed estimation in \cite{xu_information-theoretic_2017,braverman:etal:2016,cai_distributed_2020}.}

Upon completion of this work, we came across the paper \cite{pmlr-v125-acharya20b}, which considers a 
setting similar to ours and claims some partly overlapping results. 
There are also  important differences between the papers, however. { Most importantly perhaps, our proof strategy for the lower 
bound is rather different. As a result our proof is arguably easier to verify and 
at least provides an alternative to the approach proposed in \cite{pmlr-v125-acharya20b}. Furthermore, the paper \cite{pmlr-v125-acharya20b} does not allow the number 
of machines (our $m$) to vary. Essentially,  only the case that $m=n$ is considered. }

The remainder of the paper is organized as follows. In Section \ref{sec: setting} we formally describe the model and the distributed testing problem, and introduce notations used throughout the paper. In Section \ref{sec: main} we present our main results. We state the detection  lower bound in Section \ref{sec: lower}, and we provide novel distributed tests achieving the theoretical limits in Section \ref{sec: upper}. We provide a short simulation study demonstrating the {regime change} observed in the theoretical analysis in Section \ref{sec: simulation}. 
The proofs for the distributed tests achieving the minimax rate is given in Section \ref{sec: proof:thm:UB}, while the proofs of corresponding (technical) lemmas are deferred to Sections \ref{sec: thm_upper_bound}-\ref{sec: technical:lemmas}.

\section{Problem setting and notation}\label{sec: setting}

We assume we have $m$ local machines. For $j = 1, \ldots, m$, we have an observation $X^{(j)}$ 
at machine $j$, which satisfies 
\begin{equation*}
X^{(j)} = \mu + \sqrt{\frac{m}{n}} Z^{(j)}. 
\end{equation*}
Here $\mu \in \mathbb{R}^d$ is the unknown signal of interest and $Z^{(1)}, \ldots, Z^{(m)}$ 
are independent, $N_d(0,I_d)$-distributed vectors.
 We allow both the dimension $d=d_n$ and the number of machines $m=m_n$ to depend on 
the overal signal-to-noise ratio, or ``sample size''
$n$. In fact, the interesting cases (from an asymptotic perspective) 
are the ones where both $d$ and $m$ are tending to infinity with $n$. 
Nevertheless, we do not restrict ourselves to this asymptotic regime 
and cover the finite $m$ and $d$ cases as well.

We are interested in distributed tests for the hypotheses  
\begin{equation}\label{eq: hyp}
H_0: \mu = 0, \qquad \text{against} \qquad H_{\rho} : \|\mu \| \geq \rho,
\end{equation}
for $\rho > 0$. 
We consider  public coin protocols, where each machine has access to a shared random vector $U$, which is independent of the observations $X^{(1)},\ldots,X^{(m)}$. 
Each local machine $j$  carries out a local test. Using the local data $X^{(j)}$ and the public coin $U$ 
it produces a  binary,  $\{0,1\}$-valued outcome $T^{(j)}$. 
The outcomes $T^{(1)}, \ldots, T^{(m)}$ of the local tests are transmitted to a 
central machine where they are aggregated into a global test, described by a $\{0,1\}$-valued
variable $T_{dist}$.  Schematically, a distributed test looks as follows:
\begin{equation} \label{def: dist:test}
 \begin{matrix}
(X^{(1)},U) &\put(0,5){\vector(1,0){20}}  &\qquad T^{(1)}\quad&\put(-10,4){\vector(3,-1){20}} \\
\vdots &\put(0,5){\vector(1,0){20}}  &\qquad\vdots  \quad&\put(-10,5){\vector(1,0){20}} \\
(X^{(m)},U) & \put(0,5){\vector(1,0){20}}  &\qquad T^{(m)}\quad&\put(-10,5){\vector(3,1){20}}
\end{matrix}  \qquad
 T_{dist}.
\end{equation}
We denote the collection of all  distributed tests of this form by $\mathcal{T}_{dist}$.

The testing error, or risk of a distributed test $T_{dist}$, is defined as usual by 
\begin{equation}
\mathcal{R}(H_{\rho},T_{dist}) = P_0\left( T_{dist} = 1 \right) + \underset{\|\mu\| \ge \rho}{\sup} P_\mu\left( T_{dist} = 0 \right)\label{def: test:error},
\end{equation}
i.e.\ as the sum of the type one and type two errors of the test. (Here, and elsewhere,  we denote 
by $P_\mu$ the underlying distribution assuming that $\mu$ is the true signal.) 
Uniform lower bounds for this risk express the impossibility of 
detecting a signal of size $\rho$. Indeed, 
 fix a level $\alpha \in (0,1)$. If $\rho > 0$ is such that 
$\mathcal{R}(H_{\rho},T_{dist}) > \alpha$ for all $T_{dist} \in \mathcal T_{dist}$, 
then it means that there exists no consistent level-$\alpha$ test for testing $H_0$
against $H_\rho$. In other words, no distributed test of level $\alpha$ is able 
to detect all signals that are larger than $\rho$ in Euclidean norm. 

Our aim is to find the detection threshold, i.e.\ the cut-off $\rho_{dist}$ 
such that no consistent level-$\alpha$ test exists if $\rho < \rho_{dist}$
and at least one consistent level-$\alpha$ test exists if $\rho \ge \rho_{dist}$. 
We will show that, up to constants depending on the chosen level $\alpha$, 
the detection threshold is given by 
\[
\rho^2_{dist} \asymp \min\Big\{\frac{\sqrt{dm}}{n}, \frac{d^{}}{{n}}\Big\}.
\]
Moreover, we exhibit optimal tests for the case $\rho \ge \rho_{dist}$.

\subsection{Notation}
We write $a \wedge b = \min\{a, b\}$ and $a \vee b = \max\{a, b\}$. 
For two positive sequences $a_n,b_n$ we use the notation $a_n\lesssim b_n$ if there exists a universal positive constant $C$ such that $a_n\leq C b_n$. We write $a_n\asymp b_n$ which holds if $a_n\lesssim b_n$ and $b_n\lesssim a_n$ are satisfied simultaneously.  
The Euclidean norm of a vector $v \in \mathbb{R}^d$ is denoted by $\|\cdot\|$. 
For absolutely continuous probability measures $P\ll Q$, we denote by $D_{KL}(P\| Q)=\int \log\frac{dP}{dQ}dP$ their Kullback-Leibler divergence.

\section{Main results}\label{sec: main}

%

\subsection{Lower bound for the detection threshold}\label{sec: lower} 

The following theorem establishes the detection threshold. 
Its proof is described in the remainder of the subsection.

\begin{theorem}\label{lower_bound_thm}
Fix $\alpha \in (0,1)$ and suppose that
\begin{equation}\label{lower_bound_rate}
\rho^2 < c_{\alpha} \frac{\sqrt{d (m\wedge d)}}{n}
\end{equation}
for ${c_{\alpha} \leq ({1-\alpha})^2/384}$. Then,
\begin{equation*}
\underset{T\in\mathcal{T}_{dist}}{\inf}\; \mathcal{R}(H_{\rho},T) > \alpha,
\end{equation*}
where infimum is  over all distributed tests $T\in\mathcal{T}_{dist}$ given in \eqref{def: dist:test}. 
\end{theorem}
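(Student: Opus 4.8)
The plan is to lower bound the minimax testing risk by a Bayesian risk, following the classical Ingster-type approach but adapted to the distributed setting via an information-theoretic argument. First I would fix a prior on the alternative: pick a random sign vector, so that $\mu = \rho\, \xi / \sqrt{d}$ with $\xi$ uniform on $\{-1,+1\}^d$ (or, in the regime $m \ge d$, a prior supported on a lower-dimensional set, or even a single spike, reflecting the two regimes in the threshold), and let $V \in \{0,1\}$ be a fair coin that selects between $\mu = 0$ (under $V=0$) and $\mu$ drawn from this prior (under $V=1$). Then $\inf_T \mathcal{R}(H_\rho, T) \ge 1 - \| P_{1} - P_{0} \|_{TV}$ up to constants, where $P_v$ is the marginal law of the transcript $(T^{(1)},\dots,T^{(m)})$ given $V=v$; equivalently, one shows the risk exceeds $\alpha$ whenever the mutual information $I(V; T^{(1)},\dots,T^{(m)} \mid U)$ is small, using Pinsker together with Fano/Le Cam-type bounds to convert the bound on mutual information into a bound on total variation.

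The heart of the argument is bounding $I(V; T^{(1)},\dots,T^{(m)} \mid U)$. I would first use the fact that conditionally on $U$ the transcripts are produced from independent local data, so by superadditivity/tensorization of mutual information in this conditionally-independent structure, $I(V; T^{(1)},\dots,T^{(m)} \mid U) \le \sum_{j=1}^m I(V; T^{(j)} \mid U)$ (or the analogous chain-rule bound). For each local term, the key point is that $T^{(j)}$ is a \emph{one-bit} function of $(X^{(j)}, U)$, and $X^{(j)}$ carries information about $V$ only through $\mu$. Here I would invoke a strong data processing inequality: the dependence of $X^{(j)}$ on $V$ is weak because the prior makes $\mu$ small (of order $\rho$), so $I(V; T^{(j)} \mid U) \le \kappa \cdot I(V; X^{(j)} \mid U)$ with an SDPI constant $\kappa$ that degrades gracefully, and then bound $I(V; X^{(j)})$ itself — for a Gaussian shift of magnitude $\|\mu\|^2 = \rho^2$ and noise level $m/n$, a direct KL computation gives $I(V; X^{(j)}) \lesssim \rho^4 n^2 / (m^2 d)$ after averaging over the sign prior (the $\chi^2$-type second-moment bound that is standard in Ingster's method, exploiting that cross terms vanish by independence of coordinates of $\xi$). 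Because $T^{(j)}$ is one bit, the factor gained from the SDPI is $\kappa \lesssim$ (something like) the per-machine $\chi^2$-divergence or simply exploits $I(V;T^{(j)}) \le I(V;X^{(j)}) \wedge \log 2$; summing over $j$ yields $I(V; T^{(1)},\dots,T^{(m)} \mid U) \lesssim \min\{ m \cdot \rho^4 n^2/(m^2 d), \; \rho^4 n^2 / (m d) \wedge \text{stuff} \}$, and one checks that under the hypothesis $\rho^2 < c_\alpha \sqrt{d(m\wedge d)}/n$ this is below the threshold making the total variation distance less than $1-\alpha$.

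The main obstacle I anticipate is getting the one-bit constraint to interact correctly with the strong data processing inequality so that the bound matches the $m\wedge d$ behaviour rather than just $m$. When $m \le d$ the naive bound $\sum_j I(V;T^{(j)}) \le \sum_j I(V;X^{(j)}) \lesssim m \rho^4 n^2/(m^2 d) = \rho^4 n^2/(md)$ already gives the right rate, but when $m > d$ this would overcount and one must use the one-bit restriction: each $T^{(j)}$ contributes at most $\lesssim \rho^2 n/(md) \cdot (\text{something})$ so that the sum saturates at $\rho^4 n^2/d^2$ rather than growing, which is exactly where the elbow at $m=d$ comes from. Making this precise requires the sharp form of the SDPI for one-bit channels out of a Gaussian source — essentially a bound of the form $I(V;T^{(j)} \mid U) \lesssim \big(\mathbb{E}\|\mu\|^2 \cdot n/m\big)^2 / d = \rho^4 n^2/(m^2 d^2) \cdot d = \rho^4 n^2/(m^2 d)$ together with the complementary bound $I(V;T^{(j)}\mid U) \le \log 2$ won't suffice; rather one needs the SDPI contraction coefficient of the Gaussian-to-one-bit channel, and tracking its dependence on $m/n$ and $d$ carefully is the delicate part. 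I would structure the proof so this contraction estimate is isolated as a technical lemma, and the rest (Bayes reduction, tensorization, final arithmetic against \eqref{lower_bound_rate}) is routine.
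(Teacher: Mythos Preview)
Your overall architecture matches the paper's: Bayes reduction via a Rademacher prior $\mu=\rho R/\sqrt d$, Pinsker to pass from testing risk to mutual information $I(V;T^{(1)},\dots,T^{(m)}|U)$, tensorization, and a strong data processing inequality exploiting the one-bit constraint. You even correctly anticipate that the SDPI contraction coefficient is the technical heart and should be isolated as a lemma. Two points deserve correction, one of them a genuine gap.

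\textbf{The tensorization step is not valid as you state it.} You claim that ``conditionally on $U$ the transcripts are produced from independent local data,'' so that $I(V;T^{(1)},\dots,T^{(m)}|U)\le \sum_j I(V;T^{(j)}|U)$. This inequality would follow if the $T^{(j)}$ were conditionally independent given $(V,U)$, but they are not: under $V=1$ the local samples $X^{(j)}$ all share the \emph{same} random draw $\mu\sim\Pi$, so they are correlated once $\mu$ is integrated out. Conditional independence holds only given $(\mu,U)$, not given $(V,U)$. The paper's Lemma~\ref{lem: tensor} repairs this by tensorizing at the level of $(V,\mu)$ and paying an extra ``estimation'' term:
\[
I(V;T^{(1:m)})\le \sum_{j=1}^m I(V;T^{(j)}|U)\;+\;\sum_{j=1}^m I(\mu;T^{(j)}|U,V).
\]
The second sum is then controlled by a separate (estimation-type) SDPI, giving a contribution of order $n\rho^2/d$, which is precisely what forces the $d/n$ rate when $m\ge d$. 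Without this correction, your chain-rule bound is unjustified and the argument does not close.

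\textbf{The SDPI is written in the wrong direction.} You write $I(V;T^{(j)}|U)\le\kappa\,I(V;X^{(j)}|U)$, but with $\kappa\le 1$ this is just ordinary data processing and never uses the one-bit restriction. The inequality that actually does the work (the paper's Lemma~\ref{SDPI}) is of the form
\[
I(V;T^{(j)}|U)\;\le\;\beta\,I(X^{(j)};T^{(j)}|U),
\]
with $\beta$ the contraction coefficient of the channel $V\to X^{(j)}$ (obtained from sub-Gaussianity of the likelihood ratio $dP_\Pi/dP_0$), and the one-bit constraint then enters through $I(X^{(j)};T^{(j)}|U)\le H(T^{(j)}|U)\le 1$. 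Summing gives $\sum_j I(V;T^{(j)}|U)\le \beta m$, and the two regimes of $\beta$ in \eqref{beta} produce the $\sqrt{dm}/n$ and $d/n$ rates respectively. Your third paragraph gropes toward this form, but the proposal as written would not give the $m\wedge d$ elbow without both the corrected SDPI direction and the missing estimation term above.
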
 

\bigskip

The result tells us that if \eqref{lower_bound_rate} holds, there does not exist a  consistent test in $\mathcal{T}_{dist}$ of level $\alpha \in (0,1)$ for the hypotheses \eqref{eq: hyp}. In other words, no distributed test can detect all signals 
of size $\rho$. { It should be noted that we did not optimize for the value of the constant $c_\alpha$ and the statement is likely to be still true for larger values of $c_\alpha$.}

The proof of the theorem relies on three key lemmas, which we state below after introducing some necessary notations. As a first step, we use the basic fact that the supremum of the probability of a type two error of a test can be lower bounded by a Bayesian type two 
error, i.e.\ for any prior distribution $\Pi$ supported on $H_\rho$
\begin{equation*}
\underset{\mu \in H_{\rho}}{\sup} P_\mu\left( T = 0 \right) \geq \int_{H_{\rho}} P_\mu\left( T = 0 \right)\,d\Pi(\mu).
\end{equation*}

To further lower bound the risk we construct an appropriate Markov chain
and relate the testing problem to an information transfer problem through the chain.
Consider $V\sim \text{Ber}(1/2)$, i.e.\ a $V$ is $0$ or $1$, each with probability $1/2$, independent of the public coin random 
vector $U$, such that the random vectors $X^{(j)}|(V=0)$, $j=1,\dots,m$ follow \eqref{dynamics_Xj} with $\mu = 0$ and $X^{(j)}|(V=1)$ follows a Gaussian mixture $P_{\Pi}$ defined as $P_{\Pi}(A) =\int P_\mu(A)\, d\Pi(\mu)$ for all Borel sets $A\subset\mathbb{R}^{d}$. Let us denote by $\P$ the joint probability measure describing the corresponding Markov dynamics
\begin{equation} \label{def: dist:test:MC}
V  \quad {{\vector(1,0){20}} \quad \mu} \qquad 
 \begin{matrix}
\put(-10,-4){\vector(3,1){20}} &\quad(X^{(1)},U) &\put(0,5){\vector(1,0){20}}  &\qquad T^{(1)}\quad&\put(-10,4){\vector(3,-1){20}} \\
\put(-10,3){\vector(1,0){20}}&\quad\vdots &\put(0,4){\vector(1,0){20}}  &\qquad\vdots  \quad&\put(-10,4){\vector(1,0){20}} \\
\put(-10,12){\vector(3,-1){20}}&\quad(X^{(m)},U) & \put(0,5){\vector(1,0){20}}  &\qquad T^{(m)}\quad&\put(-10,5){\vector(3,1){20}}
\end{matrix}  \qquad
 T.
\end{equation}

We then have that for any distributed test $T$,
\begin{equation}\label{starting_pt_display_proof_mut_info_lem}
\mathcal{R}_{}(H_{\rho},T)\geq \P(T=1|V=0)+\P(T=0|V=1)  = 2\P(T \neq V). 
\end{equation}
The right hand side of \eqref{starting_pt_display_proof_mut_info_lem} can be further  bounded from below using the  \emph{mutual information} between $T$ and $V$ in the chain \eqref{def: dist:test:MC}, defined by
\begin{equation*}
I_\Pi (V,T) = D_{KL}\left( \P^{V \times T} \, \| \; \P^V \times \P^T \right),
\end{equation*}
where $\P^V$, $\P^T$ and $\P^{V \times T}$ denote marginal- and joint distributions of $V$ and $T$, and the subscript $\Pi$ is used to indicate the dependence on the prior $\Pi$. Informally, the mutual information measures how much knowing $T$ reduces uncertainty about $V$ and vice versa. The lower bound based on the mutual information is given in the following lemma. 
The proof of the lemma is deferred to Section \ref{proof: lem1}.

\bigskip

\begin{lemma}\label{lemma_mutual_information_testing_lb}
Let $\Pi$ be a prior on $H_\rho$ and consider the dynamics \eqref{def: dist:test:MC}. We have
\begin{equation*}
\underset{T \in \mathcal{T}_{dist}}{\inf}\; \mathcal{R}_{}(H_{\rho},T)  \geq   1 -  \sqrt{2I_\Pi(V,T)}.
\end{equation*}
\end{lemma}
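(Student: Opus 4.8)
The plan is to start from the chain of inequalities already recorded in \eqref{starting_pt_display_proof_mut_info_lem}, so that it suffices to show that for every fixed $T\in\mathcal{T}_{dist}$ one has $2\,\P(T\neq V)\geq 1-\sqrt{2I_\Pi(V,T)}$; the stated infimum bound then follows, and in the applications $I_\Pi(V,T)$ is replaced by the uniform upper bound over $\mathcal{T}_{dist}$ supplied by the next lemma. Write $a=\P(T=1\mid V=0)$ and $b=\P(T=0\mid V=1)$. Since $V\sim\mathrm{Ber}(1/2)$ and is independent of nothing relevant here, $2\,\P(T\neq V)=\P(T=1\mid V=0)+\P(T=0\mid V=1)=a+b$. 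If $a+b\geq 1$ the desired inequality is immediate, because $\sqrt{2I_\Pi(V,T)}\geq 0$; so from now on I assume $a+b<1$, i.e.\ the test beats random guessing.

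The key step is to lower bound $I_\Pi(V,T)$ by a function of $a+b$. I would use the standard identity $I_\Pi(V,T)=\tfrac12 D_{KL}\big(P_0\,\|\,\bar P\big)+\tfrac12 D_{KL}\big(P_1\,\|\,\bar P\big)$, where $P_i$ denotes the conditional law of $T$ given $V=i$ and $\bar P=\tfrac12(P_0+P_1)=\P^T$ its mixture (this is just the decomposition $I(V;T)=\sum_{v}\P(V=v)\,D_{KL}(\P^{T\mid V=v}\|\P^{T})$ applied with a uniform $V$). As $T$ is binary we have $P_0=\mathrm{Ber}(a)$, $P_1=\mathrm{Ber}(1-b)$ and $\bar P=\mathrm{Ber}\big(\tfrac{a+1-b}{2}\big)$, so the total variation distances $\|P-Q\|_{\mathrm{TV}}:=\sup_{A}|P(A)-Q(A)|$ satisfy $\|P_0-\bar P\|_{\mathrm{TV}}=\|P_1-\bar P\|_{\mathrm{TV}}=\tfrac12(1-a-b)$. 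Applying Pinsker's inequality $D_{KL}(P\|Q)\geq 2\|P-Q\|_{\mathrm{TV}}^2$ to each of the two terms yields $I_\Pi(V,T)\geq \tfrac12(1-a-b)^2$.

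Rearranging gives $\sqrt{2I_\Pi(V,T)}\geq 1-a-b=1-2\,\P(T\neq V)$, hence $2\,\P(T\neq V)\geq 1-\sqrt{2I_\Pi(V,T)}$, and combining with \eqref{starting_pt_display_proof_mut_info_lem} completes the proof. This is a routine information-theoretic computation, so I do not expect a genuine obstacle; the only points that need care are the elementary bookkeeping $2\,\P(T\neq V)=a+b$, and disposing of the degenerate regime $a+b\geq 1$ before taking square roots (otherwise Pinsker is being applied in a vacuous direction). As an alternative route one could invoke Fano's inequality in the form $I_\Pi(V,T)\geq \log 2-h_b\big(\P(T\neq V)\big)$ for $\P(T\neq V)\leq 1/2$, where $h_b(p)=-p\log p-(1-p)\log(1-p)$ is the binary entropy, together with the elementary inequality $\log 2-h_b(p)\geq \tfrac12(1-2p)^2$ (itself just Pinsker for Bernoulli laws against $\mathrm{Ber}(1/2)$); this leads to exactly the same conclusion.
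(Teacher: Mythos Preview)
Your proof is correct and takes essentially the same approach as the paper: both decompose $I_\Pi(V,T)=\tfrac12 D_{KL}(\P^{T\mid V=0}\|\P^T)+\tfrac12 D_{KL}(\P^{T\mid V=1}\|\P^T)$ and apply Pinsker to relate these KL terms to total variation, ending at the same constant. The only cosmetic difference is that you exploit the binary structure of $T$ to compute $\|P_i-\bar P\|_{\mathrm{TV}}=\tfrac12(1-a-b)$ explicitly, whereas the paper first bounds $\|\P^{T\mid V=0}-\P^{T\mid V=1}\|_{\mathrm{TV}}$ via the triangle inequality through $\P^T$ and then squares using $(x+y)^2\le 2(x^2+y^2)$.
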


\bigskip

In view of the usual data processing inequality we have $I_\Pi(V,T) \leq I_\Pi(V,(T^{(1)},\dots,T^{(m)}))$. 
The following lemma asserts that, { up to an additional term,} this further tensorizes conditional on the public coin randomness. 

\bigskip

\begin{lemma}\label{lem: tensor}
Consider the dynamics \eqref{def: dist:test:MC}. We have 
\begin{equation}\label{eq : tensored mutual information}
I_\Pi(V,(T^{(1)},\dots,T^{(m)})) \leq  \underset{j=1}{\overset{m}{\sum}} I_\Pi(V,T^{(j)} |U) + { \underset{j=1}{\overset{m}{\sum}} I_\Pi(\mu, T^{(j)} |U,V)}. 
\end{equation}
\end{lemma}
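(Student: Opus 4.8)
The plan is to exploit the conditional independence structure built into the Markov chain \eqref{def: dist:test:MC}. Working conditionally on the public coin $U$, the local transcripts $T^{(1)},\dots,T^{(m)}$ are produced from the observations $X^{(1)},\dots,X^{(m)}$, which — conditionally on $\mu$ (equivalently, on $V$ together with the draw from the prior) — are mutually independent. The first step is to recall the chain-rule decomposition of mutual information:
\[
I_\Pi(V,(T^{(1)},\dots,T^{(m)})\mid U) = \sum_{j=1}^m I_\Pi\big(V,T^{(j)}\mid U,T^{(1)},\dots,T^{(j-1)}\big),
\]
and to observe that $I_\Pi(V,(T^{(1)},\dots,T^{(m)}))\le I_\Pi(V,(T^{(1)},\dots,T^{(m)})\mid U)$ since $V\ind U$ (adding an independent conditioning variable cannot decrease the mutual information here, or one argues directly via $I(V,(T,U))=I(V,U)+I(V,T\mid U)=I(V,T\mid U)$ and the data-processing step $I(V,(T^{(1)},\dots,T^{(m)}))\le I(V,(T^{(1)},\dots,T^{(m)},U))$).

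The heart of the argument is then to bound each summand $I_\Pi(V,T^{(j)}\mid U,T^{(1)},\dots,T^{(j-1)})$ by $I_\Pi(V,T^{(j)}\mid U)+I_\Pi(\mu,T^{(j)}\mid U,V)$. The natural route is to expand $I_\Pi(V,T^{(j)},\mu \mid U,T^{<j})$ in two ways using the chain rule. On one hand it equals $I_\Pi(\mu,T^{(j)}\mid U,T^{<j}) + I_\Pi(V,T^{(j)}\mid U,T^{<j},\mu)$, and the last term vanishes because $V$ is a deterministic function of $\mu$ under the prior construction (so conditioning on $\mu$ leaves no residual dependence on $V$); hence $I_\Pi(V,T^{(j)}\mid U,T^{<j})\le I_\Pi(\mu,T^{(j)}\mid U,T^{<j})$. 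On the other hand, one must remove the conditioning on $T^{<j}$: here the key structural fact is that, given $(\mu,U)$, the transcript $T^{(j)}$ depends only on $X^{(j)}$, which is independent of $X^{(1)},\dots,X^{(j-1)}$ and hence of $T^{<j}$; thus $T^{(j)}\ind T^{<j}\mid (\mu,U)$. From this one gets $I_\Pi(\mu,T^{(j)}\mid U,T^{<j}) \le I_\Pi(\mu,T^{(j)}\mid U)$ — wait, that inequality goes the wrong way in general, so instead one should decompose $I_\Pi((\mu,T^{<j}),T^{(j)}\mid U)$ carefully: it equals $I_\Pi(T^{<j},T^{(j)}\mid U) + I_\Pi(\mu,T^{(j)}\mid U,T^{<j})$ and also equals $I_\Pi(\mu,T^{(j)}\mid U) + I_\Pi(T^{<j},T^{(j)}\mid U,\mu)$, and the last term is zero by the conditional independence just noted; rearranging gives $I_\Pi(\mu,T^{(j)}\mid U,T^{<j}) = I_\Pi(\mu,T^{(j)}\mid U) + I_\Pi(T^{<j},T^{(j)}\mid U) - 0 \ge I_\Pi(\mu,T^{(j)}\mid U)$, which again is the wrong direction. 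The correct bookkeeping is to split $I_\Pi(\mu,T^{(j)}\mid U,T^{<j})$ via the three-way chain rule into $I_\Pi(V,T^{(j)}\mid U)$ plus cross terms and to absorb the genuinely nonnegative leftover into the second sum $\sum_j I_\Pi(\mu,T^{(j)}\mid U,V)$; this is precisely the ``additional term'' flagged in the lemma statement, and sorting out exactly which nonnegative quantities land in which sum is the main obstacle.

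Concretely, I would argue as follows: by the chain rule and $V\ind U$,
\[
I_\Pi(V,(T^{(1)},\dots,T^{(m)})) \le \sum_{j=1}^m I_\Pi(V,T^{(j)}\mid U,T^{<j}),
\]
and for each $j$ write $I_\Pi(V,T^{(j)}\mid U,T^{<j}) \le I_\Pi((V,\mu),T^{(j)}\mid U,T^{<j}) = I_\Pi(\mu,T^{(j)}\mid U,T^{<j})$ (using that $V$ is a function of $\mu$). Then decompose
\[
I_\Pi(\mu,T^{(j)}\mid U,T^{<j}) = I_\Pi(\mu,(T^{(j)},T^{<j})\mid U) - I_\Pi(\mu,T^{<j}\mid U),
\]
and bound $I_\Pi(\mu,(T^{(j)},T^{<j})\mid U) \le I_\Pi(\mu, T^{(j)}\mid U) + I_\Pi(\mu,T^{<j}\mid U,T^{(j)})$ — no, the clean identity is: since $T^{(j)}\ind T^{<j}\mid(\mu,U)$, we have $I_\Pi((\mu,T^{<j}),T^{(j)}\mid U)=I_\Pi(\mu,T^{(j)}\mid U)$ does not hold either. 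The robust approach, which I would actually execute, is to first condition everything on $V$ as well as $U$ inside the second term: using $T^{(j)}\ind(T^{<j},V)\mid(\mu,U)$ one shows $I_\Pi(\mu,T^{(j)}\mid U,T^{<j}) \le I_\Pi(\mu,T^{(j)}\mid U,V)+I_\Pi(V,T^{(j)}\mid U)$ by writing $I_\Pi((\mu,T^{<j}),T^{(j)}\mid U,V) = I_\Pi(\mu,T^{(j)}\mid U,V)$ (conditional independence kills the $T^{<j}$ part given $\mu,U,V$) and $I_\Pi(V,T^{(j)}\mid U) + I_\Pi(T^{<j},T^{(j)}\mid U,V,\text{etc.})$; the terms that survive are exactly the two on the right of \eqref{eq : tensored mutual information}, and the cross/interaction terms are nonnegative and discarded. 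Summing over $j$ yields \eqref{eq : tensored mutual information}. The one genuinely delicate point — and the step I expect to spend the most care on — is verifying the conditional independence $T^{(j)}\ind (T^{(1)},\dots,T^{(j-1)},V)\mid (\mu,U)$ rigorously from the Markov diagram, and confirming that each quantity dropped in the chain-rule manipulation is indeed a nonnegative mutual information rather than a conditional quantity that could be negative after subtraction; this is where the ``$+\sum_j I_\Pi(\mu,T^{(j)}\mid U,V)$'' correction term genuinely earns its place.
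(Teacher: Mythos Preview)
Your approach is correct in substance and ultimately matches the paper's, but your order of operations makes the argument far more painful than it needs to be. You apply the chain rule to $I(V;T^{(1)},\dots,T^{(m)}\mid U)$ \emph{first} and then struggle, term by term, to remove the conditioning on $T^{<j}$. The paper does the opposite: it first augments $V$ to the pair $(V,\mu)$ via
\[
I(V;T\mid U)\ \le\ I\big((V,\mu);T\mid U\big),
\]
and only then tensorizes. Since $T^{(j)}\ind T^{<j}\mid(\mu,U)$ holds exactly, the chain rule applied to $I((V,\mu);T\mid U)$ collapses cleanly:
\[
I\big((V,\mu);T\mid U\big)=\sum_{j} \big[H(T^{(j)}\mid T^{<j},U)-H(T^{(j)}\mid V,\mu,U)\big]\ \le\ \sum_j I\big((V,\mu);T^{(j)}\mid U\big),
\]
after which one simply splits $I((V,\mu);T^{(j)}\mid U)=I(V;T^{(j)}\mid U)+I(\mu;T^{(j)}\mid U,V)$. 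Three lines, no false starts.

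Your route can also be completed: the key missing identity you were circling is
\[
I(\mu;T^{(j)}\mid U,T^{<j})=I(\mu;T^{(j)}\mid U)-I(T^{<j};T^{(j)}\mid U)\ \le\ I(\mu;T^{(j)}\mid U)=I(V;T^{(j)}\mid U)+I(\mu;T^{(j)}\mid U,V),
\]
where the first equality comes from expanding $I((\mu,T^{<j});T^{(j)}\mid U)$ two ways and using $I(T^{<j};T^{(j)}\mid U,\mu)=0$, and the last equality uses that $V$ is a deterministic function of $\mu$. Combined with $I(V;T^{(j)}\mid U,T^{<j})\le I(\mu;T^{(j)}\mid U,T^{<j})$ this gives exactly what you want. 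So nothing in your proposal is actually wrong; but the repeated ``wait, that goes the wrong way'' episodes all stem from not having augmented $V\to(V,\mu)$ early enough. Doing the augmentation before the chain rule is the cleaner bookkeeping, and it is what the paper does.
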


\bigskip

The proof of this lemma is given in Section \ref{proof: lem_tensor}. This bound, combined with Lemma \ref{lemma_mutual_information_testing_lb}, allows us to break down the difficulty of the 'global' testing problem in terms of the difficult of the $m$ 'local' testing problems, captured by the quantities $I_\Pi(V,T^{(j)} |U)$. These conditional local mutual informations quantify the capacity of the local tests to distinguish a signal drawn from the prior $\Pi$ from 
the zero signal. {The second sum in the display of the lemma captures dependency between the transcripts and the prior draw $\mu \sim \Pi$. Essentially, it captures how well the signal can be \emph{estimated} by the local tests. }

We now discuss the choice of prior distribution $\Pi$. Let us set $\epsilon=\rho/\sqrt{d}$, let $R$ be a $d$-dimensional vector of independent Rademacher random variables, and define the prior $\Pi$ as the distribution of $\epsilon R$. 
Note that $\Pi$ has support contained in $H_\rho$. (Such choices are typically considered as least favorable priors supported on signals that are 
difficult to detect, see for instance Section 3.2 of \cite{ingster:2002}.)

Since $V$, $\mu$ and $X^{(j)}$ are independent of $U$, conditioning on $U$ does not distrupt the Markov chain property: 
we have the chain $V|U \to \mu | U \to X^{(j)}|U \to T^{(j)}|U$. { Consequently, the ``estimation term'' $I_\Pi(\mu, T^{(j)} |U,V)$ can be handled using strong data processing techniques employed in distributed estimation, see for example Lemma 11 in \cite{cai_distributed_2020}. For completeness, we adopted the aforementioned lemma for the above choice of prior distribution in the form of Lemma \ref{lem : multivar gaussian estimation SPDI} in the appendix, which yields that $I_\Pi(\mu, T^{(j)} |U,V) \leq 128 \frac{n \rho^2}{d m} I_\Pi(X^{(j)} , T^{(j)} |U,V=1)$. Using that $T^{(j)}$ is binary valued, we obtain that the second term in \eqref{eq : tensored mutual information} is bounded above by $128 \frac{n \rho^2}{d}$.}  

The loss of information about $V$ resulting from the compression of $X^{(j)}|U$ into $T^{(j)}|U$ in this Markov chain is quantified by Lemma \ref{SDPI} below. {The lemma comes in the form of a strong data processing inequality for the testing problem and forms the crux of the proof of the lower bound. It captures the difficulty of the local testing problem in terms of $n$, $m$, $d$ and $\rho$.}


\bigskip

\begin{lemma}[Public Coin Strong Data Processing Inequality]\label{SDPI}
With $\Pi$ as defined above we have
\begin{equation*}
I_\Pi(V,T^{(j)}|U) \leq (48\beta \wedge 1) I_\Pi(X^{(j)},T^{(j)}|U),
\end{equation*}
where
\begin{equation}\label{beta}
\beta = \begin{cases}
\frac{n^2\rho^4}{dm^2} \;\;\; \text{ if } \frac{m}{n \rho^2} < 1/2, \\
\frac{2n\rho^2}{dm} \;\;\; \text{ if } \frac{m}{n \rho^2} \geq 1/2.
\end{cases}
\end{equation}
\end{lemma}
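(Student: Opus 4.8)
The plan is to reduce to a statement conditional on the public coin and then average. Since $U$ is independent of $(V,\mu,X^{(1)},\dots,X^{(m)})$, for each fixed value $u$ of the coin the chain $V\to X^{(j)}\to T^{(j)}$ is still Markov and the channel $V\to X^{(j)}$ does not depend on $u$: under $V=0$ the law of $X^{(j)}$ is $P_0:=N_d(0,\tfrac{m}{n}I_d)$, and under $V=1$ it is the Gaussian mixture $P_1:=\int N_d(\mu,\tfrac{m}{n}I_d)\,d\Pi(\mu)$, i.e.\ the law of $\epsilon R+\sqrt{m/n}\,Z$; write $P_X:=\tfrac12(P_0+P_1)$ for the (conditional) marginal law of $X^{(j)}$. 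It therefore suffices to prove
\[
I_\Pi\big(V,T^{(j)}\mid U=u\big)\ \le\ (48\beta\wedge1)\,I_\Pi\big(X^{(j)},T^{(j)}\mid U=u\big)
\]
with a bound uniform in $u$, and then integrate over $u$. The factor $1$ is just the conditional data-processing inequality for $V\to X^{(j)}\to T^{(j)}$; the content is the factor $48\beta$, which is the strong data-processing statement that a one-bit summary of $X^{(j)}$ can retain only an $O(\beta)$-fraction of the information about which hypothesis holds.

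The single quantity controlling this contraction is $\chi^2(P_1\|P_0)$, and the first concrete step is to compute it in closed form. Using the one-dimensional identity $\int\phi_\sigma(x-a)\phi_\sigma(x-b)\phi_\sigma(x)^{-1}\,dx=e^{ab/\sigma^2}$ coordinatewise ($\phi_\sigma$ the $N(0,\sigma^2)$-density, $\sigma^2=m/n$, $\epsilon^2=\rho^2/d$), together with the fact that for independent Rademacher vectors $R,R'$ the products $R_kR'_k$ are i.i.d.\ signs, one obtains
\[
\chi^2(P_1\|P_0)+1=\mathbb{E}_{R,R'}\!\left[e^{\epsilon^2\langle R,R'\rangle/\sigma^2}\right]=\cosh\!\Big(\tfrac{n\rho^2}{dm}\Big)^{\!d}.
\]
Write $t:=n\rho^2/(dm)$. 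By the standing hypothesis \eqref{lower_bound_rate} we have $n\rho^2\le c_\alpha\sqrt{dm}$, so $dt^2=n^2\rho^4/(dm^2)\le c_\alpha^2/m$ is small. Using $\cosh t\le e^{t^2/2}$ and $e^x-1\le 2x$ for $x$ small: when $dt>2$ (i.e.\ $m/(n\rho^2)<1/2$) this gives $\chi^2(P_1\|P_0)\le e^{dt^2/2}-1\le dt^2=n^2\rho^4/(dm^2)$; when $dt\le2$ (i.e.\ $m/(n\rho^2)\ge1/2$) then $dt^2/2\le t$, so $\chi^2(P_1\|P_0)\le e^{t}-1\le 2t=2n\rho^2/(dm)$. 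In both cases $\chi^2(P_1\|P_0)\le\beta$ with $\beta$ as in \eqref{beta}, and in particular $\chi^2(P_1\|P_0)\le1$.

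It remains to show $I_\Pi(V,T^{(j)}\mid U=u)\le48\,\chi^2(P_1\|P_0)\,I_\Pi(X^{(j)},T^{(j)}\mid U=u)$, and this is the main obstacle. Writing $a=a_u\colon\mathbb{R}^d\to[0,1]$ for the conditional test, $q_v:=\int a\,dP_v$ and $p:=\tfrac12(q_0+q_1)$: since $T^{(j)}$ is binary and $V$ uniform, $I_\Pi(V,T^{(j)}\mid U=u)=\tfrac12 D_{KL}(\mathrm{Ber}(q_0)\|\mathrm{Ber}(p))+\tfrac12 D_{KL}(\mathrm{Ber}(q_1)\|\mathrm{Ber}(p))\le (q_0-q_1)^2/(4p(1-p))$ via $D_{KL}\le\chi^2$; writing $q_1-q_0=\int(a-q_0)\big(\tfrac{dP_1}{dP_0}-1\big)\,dP_0$ and applying Cauchy--Schwarz gives $(q_0-q_1)^2\le\mathrm{Var}_{P_0}(a)\,\chi^2(P_1\|P_0)$; and $\mathrm{Var}_{P_0}(a)\le2\,\mathrm{Var}_{P_X}(a)\le I_\Pi(X^{(j)},T^{(j)}\mid U=u)$, the first step because $dP_0\le2\,dP_X$, the second by Pinsker applied pointwise inside $I_\Pi(X^{(j)},T^{(j)}\mid U=u)=\mathbb{E}_{P_X}[D_{KL}(\mathrm{Ber}(a(X^{(j)}))\|\mathrm{Ber}(p))]$. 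This already settles all tests with $p(1-p)$ bounded below by an absolute constant. The genuinely delicate, and crucial, case is that of near-deterministic tests ($p\to0$ or $p\to1$), where the $1/(p(1-p))$ blow-up is not compensated by the crude Pinsker bound and the Cauchy--Schwarz step is itself lossy; I would handle this range by replacing Pinsker, where it is weak, with the sharper estimate $D_{KL}(\mathrm{Ber}(a)\|\mathrm{Ber}(p))\gtrsim(a-p)^2/(a\vee p)$ and by exploiting the logarithmic growth of $D_{KL}(\mathrm{Ber}(a)\|\mathrm{Ber}(p))$ on the region where $a$ is of constant order while $p$ is tiny — the same kind of analysis underlying the strong data-processing lemmas used for distributed estimation (cf.\ \cite{braverman:etal:2016,cai_distributed_2020} and the estimation bound quoted above) — with a short case split on $p$ producing the absolute constant $48$. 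Combining this with the $\chi^2$ estimate of the previous paragraph and integrating over $U$ completes the proof.
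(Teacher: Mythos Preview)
Your reduction to a fixed value of the coin and your computation of $\chi^2(P_1\|P_0)\le\beta$ are correct, and for tests with $p(1-p)$ bounded away from zero your Cauchy--Schwarz/Pinsker argument indeed gives the bound. But the near-deterministic case is not a technicality that a ``short case split'' will close: with only the $\chi^2$ bound it is actually \emph{false} that $I(V,T)\le C\,\chi^2(P_1\|P_0)\,I(X,T)$ for a universal $C$. A small discrete example makes this visible: take $X\in\{0,1,2\}$, $P_0=(1-2\delta,\delta,\delta)$, $P_1=(1-2\delta,\delta(1+c),\delta(1-c))$ and $T=\mathbbm{1}\{X=1\}$. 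Then $\chi^2(P_1\|P_0)=2c^2\delta$, $I(V,T)\asymp c^2\delta$, while $I(X,T)=h(p)\asymp -\delta\log\delta$, so
\[
\frac{I(V,T)}{\chi^2(P_1\|P_0)\,I(X,T)}\ \asymp\ \frac{1}{-\delta\log\delta}\ \xrightarrow[\delta\to 0]{}\ \infty .
\]
In other words, the SDPI constant is \emph{not} controlled by the variance of the likelihood ratio; it is controlled by its sub-Gaussian norm. The sharper Bernoulli KL bound you propose, $D_{KL}(\mathrm{Ber}(a)\|\mathrm{Ber}(p))\gtrsim(a-p)^2/(a\vee p)$, would give $\mathrm{Var}_{P_X}(a)\lesssim p\,I(X,T)$ only on the region $\{a(\cdot)\lesssim p\}$; on the complementary region, where $a(\cdot)$ is of constant order, it fails (for $a$ near $1$ and $p$ tiny, $D_{KL}\asymp\log(1/p)$ while $(a-p)^2/p\asymp 1/p$), and the Cauchy--Schwarz step has already thrown away exactly the tail information you would need to recover.

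The paper's proof uses a genuinely stronger input: it shows (Lemma~\ref{subgauss_lemma}) that for this specific Gaussian--Rademacher channel the centred likelihood ratio $\mathscr L_v(X^{(j)})-1$ is $\sqrt{24\beta}$-sub-Gaussian, i.e.\ the full MGF is controlled, not just the second moment. That is then combined with the entropy--MGF variational inequality $\E GH\le \E H\log H+\log\E e^G$ (Theorem~4.13 in \cite{boucheron:etal:2013}), applied with $G=s(\mathscr L_v-1)$ and $H=g_{t,u}=d\P^{X^{(j)}|T^{(j)}=t,U=u}/d\P^{X^{(j)}|U=u}$, to obtain
\[
\mathrm{Cov}\big(\mathscr L_v(X^{(j)}),g_{t,u}(X^{(j)})\big)^2\ \le\ 48\beta\,D_{KL}\big(\P^{X^{(j)}|T^{(j)}=t,U=u}\,\|\,\P^{X^{(j)}|U=u}\big),
\]
so that after summing over $t$ the right-hand side is exactly $48\beta\,I_\Pi(X^{(j)},T^{(j)}|U=u)$. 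The point is that this variational step produces the \emph{KL} term $E[H\log H]$ rather than the variance $\mathrm{Var}(H)$ that Cauchy--Schwarz would give; it is precisely this gain that handles the near-deterministic regime uniformly. To make your approach go through you would need to prove the sub-Gaussianity of $\mathscr L_v$ (which is a separate, nontrivial computation specific to the model) and then invoke the same variational inequality --- at which point you are essentially reproducing the paper's argument.
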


\bigskip

We give the proof of the lemma in Section \ref{proof: PC-SDPI}. By combining the information theoretic inequalities above 
with the fact that $I(X^{(j)},T^{(j)}|U)\leq H(T^{(j)}|U) \leq 1$, which is true because $T^{(j)}$ is a binary variable,  we get that 
$$I_{\Pi}(V,T)\leq \sum_{j=1}^{m} I_\Pi(V,T^{(j)} |U) + { \underset{j=1}{\overset{m}{\sum}} I_\Pi(\mu, T^{(j)} |U,V) \leq  48\beta m + 128 \frac{n \rho^2}{d}}.$$
Therefore, in view of Lemma \ref{lemma_mutual_information_testing_lb},
\begin{equation*}
 \mathcal{R}_{}(H_{\rho},T)  \geq 1 - 4\sqrt{ 6 \frac{n\rho^2}{d} { \left( \max \left \{ \frac{n \rho^2}{m} , 2 \right \} + 8/3 \right)} }.
\end{equation*}
For $\rho$ satisfying \eqref{lower_bound_rate}, the right-hand side is bounded from below by $\alpha$ for an 
arbitrary distributed test $T\in\mathcal{T}_{dist}$.

\subsection{Optimal tests attaining the lower bound}\label{sec: upper}

In this section, we exhibit a distributed testing procedure that is optimal in the sense 
that it attains the lower bound of Theorem \ref{lower_bound_thm}. More precisely, 
we show that if $\rho^2$ is larger than a multiple of the right-hand side of \eqref{lower_bound_rate}, 
there exists a distributed test for $H_0$ against $H_\rho$ 
with risk bounded by a chosen level $\alpha \in (0,1)$. 
Summarising, we have the following theorem, complementing the lower bound of Theorem \ref{lower_bound_thm}. 

\bigskip

\begin{theorem}\label{thm_upper_bound}
Fix $\alpha \in (0,1)$ and suppose that 
\begin{equation}\label{rho_seq_bound_that_is_uniformly_testable}
\rho^2 \geq C_{\alpha} \frac{\sqrt{d (m\wedge d)}}{n}
\end{equation}
for $C_\alpha > 0$ a constant depending only on $\alpha$ as given in \eqref{def:M_alpha:C_alpha}.
Then there exists a test $T \in \mathcal{T}_{dist}$ such that 
 $ \mathcal{R}_{}(H_{\rho},T) \leq \alpha$.
\end{theorem}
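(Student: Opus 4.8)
The plan is to prove the theorem by exhibiting, for the three regimes anticipated in the introduction, an explicit member of $\mathcal{T}_{dist}$ and verifying that it has risk at most $\alpha$ whenever \eqref{rho_seq_bound_that_is_uniformly_testable} holds; here $M_\alpha$ and $C_\alpha$ are constants depending only on $\alpha$, fixed at the end so that all three bounds below go through at once ($M_\alpha$ an absolute multiple of $\log(1/\alpha)$, $C_\alpha$ chosen much larger afterwards, which is not circular). Since none of the tests below involves $\rho$ and each has a type-two error $\P_\mu(T=0)$ that depends on $\mu$ only through $\|\mu\|$ and is non-increasing in it, it suffices to bound the type-two error at $\|\mu\|=\rho$ with $\rho^2$ equal to the right-hand side of \eqref{rho_seq_bound_that_is_uniformly_testable}. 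In the first regime, $m\le M_\alpha$, the number of machines is a constant and machine $j=1$ alone already faces essentially the non-distributed problem with signal-to-noise ratio $n/m\asymp n$; I would discard the other machines and take $T_{dist}=T^{(1)}=\mathbf{1}\{\|X^{(1)}\|^2>c_d^{\alpha}\,m/n\}$ with $c_d^{\alpha}$ the upper $\alpha/2$-quantile of $\chi^2_d$, so that the type-one error is exactly $\alpha/2$. Because $\|X^{(1)}\|^2$ is $(m/n)\chi^2_d(\lambda)$-distributed with $\lambda=n\rho^2/m$, the type-two error $\P(\chi^2_d(\lambda)\le c_d^{\alpha})$ falls below $\alpha/2$ as soon as $\lambda\gtrsim\sqrt{d\log(1/\alpha)}+\log(1/\alpha)$ (classical non-central $\chi^2$ tail bounds of Laurent--Massart type, cf.\ \cite{baraud:2002}), which holds once $C_\alpha$ is taken large enough relative to $M_\alpha$.

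In the second regime, $M_\alpha<m\le d$, I would let every machine threshold its own statistic at the null median: with $c_d:=\mathrm{med}(\chi^2_d)$ set $T^{(j)}=\mathbf{1}\{\|X^{(j)}\|^2>c_d\,m/n\}$, and let the central machine reject when $\sum_{j=1}^m T^{(j)}>m/2+t_\alpha\sqrt{m}$. Under $H_0$ the $T^{(j)}$ are i.i.d.\ $\mathrm{Ber}(1/2)$, so Hoeffding's inequality bounds the type-one error by $e^{-2t_\alpha^2}$. Under $H_\rho$ they are i.i.d.\ $\mathrm{Ber}(p)$ with $p=\P(\chi^2_d(\lambda)>c_d)$ and $\lambda=n\rho^2/m$, and the heart of the matter is the non-asymptotic lower bound
\[
p-\tfrac12\;\gtrsim\;\min\!\Big\{\frac{\lambda}{\sqrt d},\,1\Big\},
\]
with an absolute constant. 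I would derive it from the Poisson-mixture representation $\chi^2_d(\lambda)\stackrel{d}{=}\chi^2_{d+2N}$, $N\sim\mathrm{Poisson}(\lambda/2)$, together with the identity $\P(\chi^2_{\nu+2}>c)-\P(\chi^2_\nu>c)=(c/2)^{\nu/2}e^{-c/2}/\Gamma(\nu/2+1)$: the right-hand side equals twice the $\chi^2_{\nu+2}$-density at $c$, which standard Gamma-function estimates show is of order $1/\sqrt d$ for $c=c_d\approx d$ and $\nu\le d+\sqrt d$, so summing these increments against $\P(N\ge k)$ gives $p-\tfrac12\gtrsim\lambda/\sqrt d$ for $\lambda\le\sqrt d$, and stochastic monotonicity in $\lambda$ upgrades this to all $\lambda$. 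Since $m\lambda/\sqrt d=n\rho^2/\sqrt d\ge C_\alpha\sqrt m$ (using $m\le d$), the mean of $\sum_j T^{(j)}$ is displaced from its null value $m/2$ by $\gtrsim\min\{C_\alpha\sqrt m,m\}$ while its standard deviation stays $\le\sqrt m/2$; picking $t_\alpha$, then $M_\alpha$ (so $\sqrt{M_\alpha}\gtrsim t_\alpha$) and $C_\alpha$ large makes the type-two error $\le\alpha/2$, again by Hoeffding.

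In the third regime, $m>\max\{d,M_\alpha\}$, one has $\rho^2\ge C_\alpha d/n$, the aggregated $\chi^2$-statistic is too noisy (its signal-to-noise ratio is of order $\sqrt{d/m}\to0$), and I would use the public coin. Let $U$ be uniform on $S^{d-1}$, put $T^{(j)}=\mathbf{1}\{U^\top X^{(j)}>0\}$, and have the central machine reject when $|\sum_{j=1}^m T^{(j)}-m/2|>t_\alpha\sqrt{m}$. Under $H_0$, $U^\top X^{(j)}\sim N(0,m/n)$, so the $T^{(j)}$ are i.i.d.\ $\mathrm{Ber}(1/2)$ and the type-one error is $\le 2e^{-2t_\alpha^2}$. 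Under $H_\rho$, conditionally on $U$ they are i.i.d.\ $\mathrm{Ber}(p(U))$ with $p(U)=\Phi(U^\top\mu\sqrt{n/m})$, and two facts conclude the argument: (i) the projection of a fixed vector onto a uniform direction anti-concentrates, $\P_U\!\big(|U^\top\mu|\le\epsilon\|\mu\|/\sqrt d\big)\lesssim\epsilon$, because a coordinate of a uniform point of $S^{d-1}$ has density at most an absolute constant times $\sqrt d$, so choosing $\epsilon=\epsilon_\alpha$ small makes this event have probability $\le\alpha/4$; (ii) off that event $|p(U)-\tfrac12|\ge\Phi(\epsilon_\alpha\sqrt{C_\alpha/m})-\tfrac12\gtrsim\min\{\epsilon_\alpha\sqrt{C_\alpha/m},\,1\}$ (using $\rho^2\ge C_\alpha d/n$), so conditionally the mean of $\sum_j T^{(j)}$ is displaced from $m/2$ by $\gtrsim\min\{\epsilon_\alpha\sqrt{C_\alpha}\sqrt m,\,m\}$ in one of the two directions, and Hoeffding bounds the conditional type-two error by $\alpha/4$ once $C_\alpha$ and $M_\alpha$ are large; splitting on the anti-concentration event gives total type-two error $\le\alpha/2$.

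The main obstacle is the second regime, and within it the explicit non-asymptotic ``boost'' $p-\tfrac12\gtrsim\min\{\lambda/\sqrt d,1\}$: this rests on the Gamma-function estimates for the $\chi^2$-density together with stochastic monotonicity (which spares a separate treatment of the intermediate range of $\lambda$), and it must come with a usable absolute constant. The other genuine task is organizational --- pinning down one pair $(M_\alpha,C_\alpha)$ that simultaneously validates all three regimes and every boundary case (small $d$, $m$ close to $M_\alpha$, and so on). The spherical anti-concentration and the binomial (Hoeffding) estimates are routine; the only conceptual subtlety there is that the anti-concentration bound is merely linear in $\epsilon$, so a ``bad'' public coin of constant probability cannot be ruled out --- which is exactly why $C_\alpha$ is allowed to depend on $\alpha$.
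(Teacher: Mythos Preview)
Your proposal is correct and follows the paper's three-regime architecture with the same aggregated test statistics; the differences are at the implementation level and in each case your choice is at least as clean. In the regime $M_\alpha<m\le d$ the paper uses a \emph{randomized} local bit $T^{(j)}\mid S^{(j)}\sim\mathrm{Ber}(F_{\chi^2_d}(S^{(j)}))$ rather than your deterministic median threshold, and proves the boost in the equivalent form $\Pr(V_d^\delta\ge U_d)\ge\tfrac12+c\min\{\delta/\sqrt d,\tfrac12\}$ (Lemma~\ref{claim_larger_than_half_lemma}) by analyzing the density of $(U_d-U_d')/(2\sqrt d)$ via Fourier inversion; your Poisson-mixture plus incomplete-Gamma increment route to $p-\tfrac12\gtrsim\min\{\lambda/\sqrt d,1\}$ is a legitimate and arguably more direct alternative, and like the paper it implicitly needs $d$ large (handled, as you note, by taking $M_\alpha$ large since $d\ge m>M_\alpha$ here). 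In the regime $m>d$ the paper takes $U\sim N(0,I_d)$ and must therefore also control $\|U\|_2/\sqrt d$ (the event $D_\eta$ in Section~\ref{sec: case2}); your spherical $U$ removes that nuisance at the price of the coordinate anti-concentration bound, which is elementary. For $m\le M_\alpha$ the paper uses Chebyshev rather than Laurent--Massart, which is cruder but adequate. One small remark: your blanket reduction ``it suffices to bound the type-two error at $\|\mu\|=\rho$'' is immediate for the $\chi^2$-based tests by stochastic monotonicity of the non-central law, but for the public-coin test it is cleanest to note (as the paper does) that your conditional Hoeffding bound already holds for every $\|\mu\|\ge\rho$, so no separate monotonicity argument is needed.
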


\bigskip

We prove the theorem by constructing three concrete distributed tests, for three different ranges of $m$. We outline the construction of these tests in this section. The detailed verification that they are consistent at the level $\alpha$ for testing $H_0$ against $H_{\rho}$ in their respective ranges  is deferred to Section \ref{sec: proof:thm:UB}.

First assume that the number of machines is large enough, but does not exceed the dimension $d$, i.e. $M_{\alpha}\leq m \leq d$, for a large enough constant $M_{\alpha}>0$ given in \eqref{def:M_alpha:C_alpha}. 
In this case, Theorem  \ref{lower_bound_thm} asserts that the  detection lower bound for $\rho$ is a multiple of $(dm)^{1/4}/n^{1/2}$.
We propose the following distributed test that is able to detect signals with Euclidian norm of that order
if $M_{\alpha}\leq m \leq d$. In this case no public coin is needed.
In this setting we first compute the local test statistic $S_{m \leq d}^{(j)} = (n/m)\|X^{(j)}\|^2$
at every machine $j$. Under the null hypothesis, $S_{m \leq d}^{(j)}$ follows a chi-square distribution with $d$ degrees of freedom, i.e. $S_{m \leq d}^{(j)}\sim \chi^2_d$. Then for every $j$ we consider the randomized test $T_{m \leq d}^{(j)}$ using Bernoulli random variables
\begin{equation*}
T_{m \leq d}^{(j)} | S_{m \leq d}^{(j)} \sim \text{Ber}\left(F_{\chi^2_d} \left(S_{m \leq d}^{(j)} \right)\right), 
\end{equation*}
where $F_{\chi^2_d}$ denotes the distribution function of $\chi^2_d$. Under the null hypothesis the $T_{m \leq d}^{(j)}$
are independent and $\text{Ber}(1/2)$. At the central machines we combine the local test in 
a global test $T_{m \leq d} \in \mathcal{T}_{dist}$ by setting
\begin{equation}
T_{m \leq d} = \mathbbm{1} \left \{ \left| \underset{j=1}{\overset{m}{\sum}} (T_{m \leq d}^{(j)}- 1/2) \right| \geq  \sqrt{m}\bar{\kappa}_\alpha  \right \},\label{def_global_test_small_m}
\end{equation}
with $\bar{\kappa}_\alpha^2 = 3\log \left(4/\alpha \right)$. By a standard Chernoff bound, the type 
one error of this test is bounded by $\alpha/2$
for $m$ large enough. In Section \ref{sec: case1}, we prove that the same is true for the type two error if 
$\|\rho\|^2 \ge C_{\alpha}\sqrt{dm}/n$.  as posed by Theorem 

Next we assume that $m \geq d \vee M_{\alpha}$, in which case the detection lower bound for $\rho$ is a multiple of 
$\sqrt{d/n}$. { When $m/d \to \infty$, tests based on the statistics $S_{m \leq d}^{(j)}$ cannot reach the detection lower bound anymore as can be observed by inspection of the variance of \eqref{Tj_under_Pf}, or through Theorem 16 in \cite{pmlr-v125-acharya20b}.} We propose a novel distributed test using a public coin. 
Specifically, we assume all machine have access to a vector 
 $U=(U_1,\ldots,U_d)$ of independent standard normal random variables. 
 For $j = 1, \ldots, m$ we  compute the local test statistics $S_{m \geq d}^{(j)} = \sqrt{\frac{n}{md}} U^T X^{(j)}$ and the corresponding local tests
\begin{equation*}
 T_{m \geq d}^{(j)} = \mathbbm{1} \left \{ S_{m \geq d}^{(j)} \geq 0 \right \}.
\end{equation*}
Then we aggregate these local tests in the central machine to a distributed test $T_{m \geq d} \in \mathcal{T}_{dist}$
by defining
\begin{equation}
T_{m \geq d} = \mathbbm{1} \left \{ \left| \underset{j=1}{\overset{m}{\sum}} (T_{m \geq d}^{(j)} - 1/2) \right| \geq  \sqrt{m}\tilde{\kappa}_\alpha  \right \},\label{def_global_test_large_m}
\end{equation}
with $\tilde{\kappa}_\alpha^2 = (1/3) \log \left(16/\alpha \right)$. 
In Section \ref{sec: case2} we prove that this test satisfies the required error bound if 
$\rho^2 \ge C_{\alpha}d/n$.

Finally, we consider the case $m\leq M_\alpha$ for completeness. We have to treat it separately for technical reasons, 
although in practice we would probably simply use the first test above for all cases $m \le d$. 
 To achieve optimality in this case we can simply choose  a single machine 
and conduct the hypothesis test we would use  in the classical, non-distributed  setting.
Specifically, we can for instance use as global test $T_{m \asymp 1} \in \mathcal{T}_{dist}$
the test given by 
\begin{equation}
T_{m \asymp 1}  = \mathbbm{1} \left\{  \frac{n}{\sqrt{d}m}\|X^{(1)}\|^2-\sqrt{d} \geq \kappa_{\alpha} \right\},\label{def_global_test_m_is_1}
\end{equation}
with $\kappa_{\alpha}=2/\sqrt{\alpha}$. See Section \ref{sec: case3} for details.

\section{Simulation experiments}\label{sec: simulation}

In this section we investigate the performance of the tests $T_{m \le d}$ and $T_{m \ge d}$
that were proposed in the preceding section, in particular with regards to the 
 ``elbow effect'' when $m \simeq d$ that we observed in our theoretical results.
We visualize the performance using two different simulation experiments. 

In the first experiment
we consider two fixed choices of $m, n$ and $d$, one with $m < d$ and one with $m > d$. 
We simulate  data with a signal with increasing strength $\rho$ and 
then assess the performance of the tests by computing the ``true positive rate'' (TPR), i.e.\ the fraction of the simulations in which they correctly detect the signal.

For the second experiment we also consider two scenarios, one with $m \lesssim d$ and one with $m \gtrsim d$.
But now we fix the signal strength $\rho$ a little above the detection limit in both cases and investigate how
the performance of the tests depends on the total signal-to-noise ratio, 
or sample size $n$.

\subsection{First experiment} 

\begin{figure}[h]\label{fig: increasing alt norm}
\begin{center}
\includegraphics[width=\textwidth]{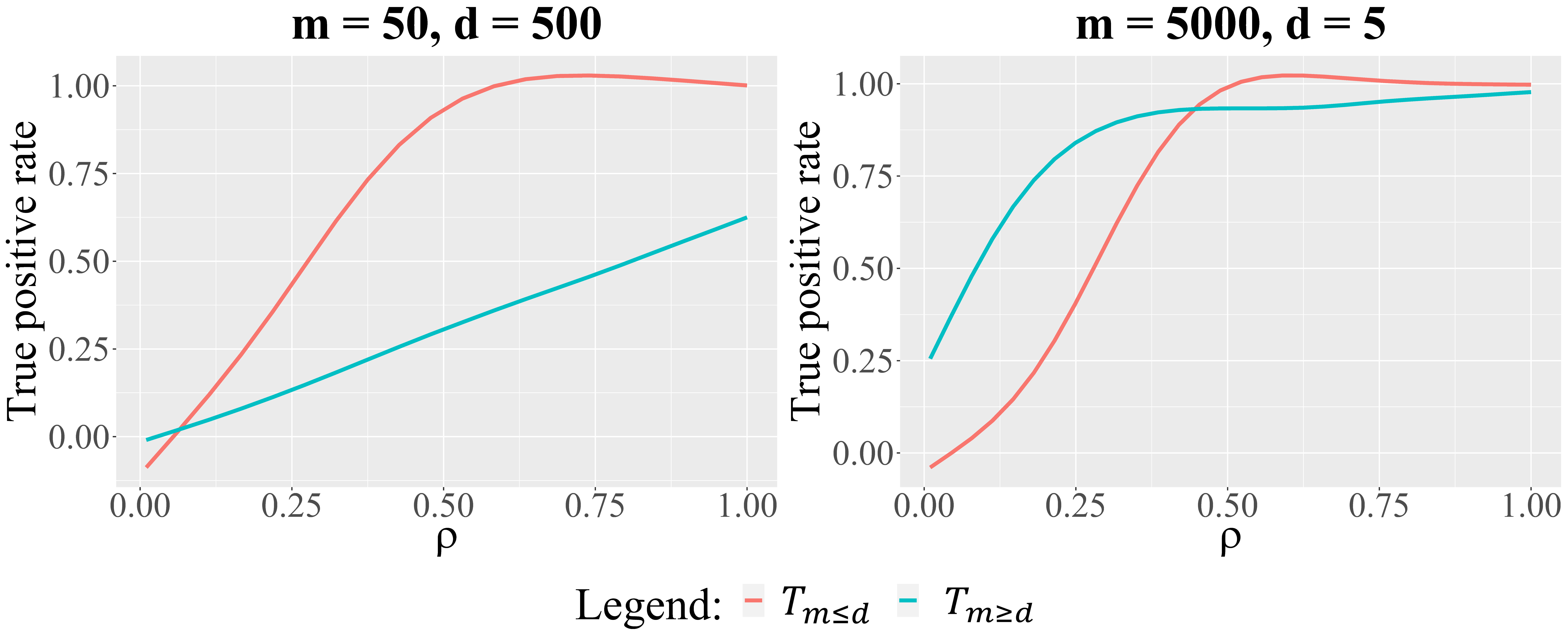}
\end{center}
\caption{The horizontal axes correspond to the Euclidian norm $\|\mu\|=\rho$ of the signal
\eqref{two_point_prior_draws} used in the simulations. 
In the plot  on the left we have $m=50$ and $d=500$, on the right we have $m=5000$ and $d=5$. 
For each $\rho$ in a grid between $0$ and $1$, $100$ datasets were simulated. 
The solid lines give the TPR, i.e.\ the fraction of the  $100$ runs
in which the tests correctly detected the signal. The red lines correspond to $T_{m \le d}$, 
the blue lines to $T_{m \ge d}$.} 
\end{figure}

In the first simulation we consider fixed values for $n$, $m$ and $d$
and we simulate data in which we let the strength $\|\mu\|$ of the unobserved signal vary between $0$ and $1$. 
We investigate how well each of the distributed tests manages to correctly 
reject the null hypothesis, i.e.\ detect the signal.

In Figure \ref{fig: increasing alt norm}, we consider two different scenarios. In both scenarios we choose 
$n=10^4$ and we have specified the rejection criterion for both tests such that they have a type one error probability 
of less than  $\alpha = 0.05$. 
The signal $\mu$ is drawn according to 
\begin{equation}\label{two_point_prior_draws}
\mu = \frac{\rho}{\sqrt{d}} R,
\end{equation}
where $R$ is a vector of independent Rademacher random variables, and we let 
the signal strength $\rho = \|\mu\|$ vary from $0$ to $1$. 

In the $m < d$ scenario corresponding to the plot on the left-hand side in Figure \ref{fig: increasing alt norm}, 
we see that the test $T_{m \le d}$
outperforms the public coin test $T_{m \ge d}$,
 in accordance with our theoretical results.
In the $m > d$ scenario on the right-hand side we see that the test 
$T_{m \ge d}$ detects the presence of the signal much earlier than the test
$T_{m \le d}$.

\subsection{Second experiment} 

\begin{figure}[h]\label{fig: n m d seq}
\includegraphics[width=\textwidth]{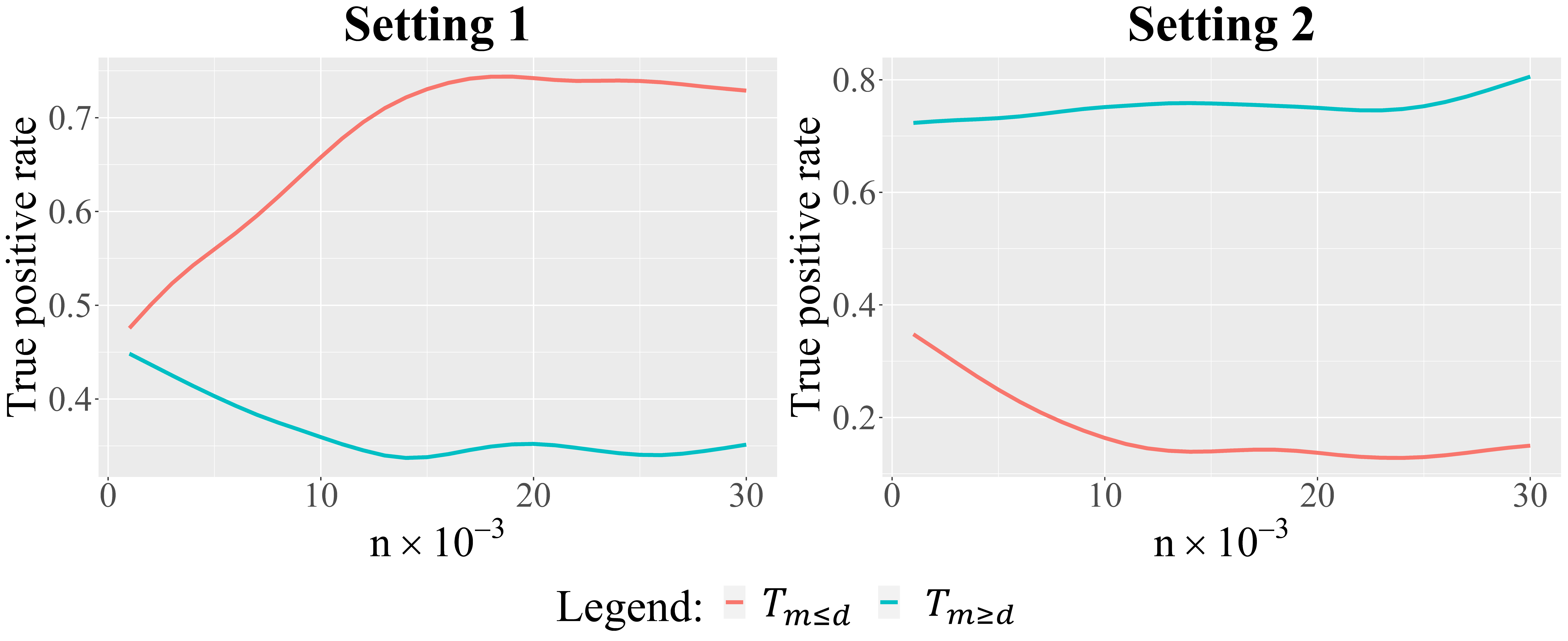}
\caption{The horizontal axes corresponds to the total sample size $n$.
In the plot on the left we have $d= n^{{2}/{3}}$,  $m = 500$,  
and $\rho =  {\log(d) (d m)^{1/4} }/{\sqrt{n}}$. On the right, $m = n/10$, $d=5$ and $\rho = 2 \log(m) \sqrt{{d}/{n}}$. 
For every $n$ in a grid ranging from $1$ to $3\cdot10^4$, $100$ datasets were simulated. 
The solid lines give the TPR, i.e.\ the fraction of the  $100$ runs
in which the tests correctly detected the signal. The red lines correspond to $T_{m \le d}$, 
the blue lines to $T_{m \ge d}$.}
\end{figure}

In the second experiment we also consider two scenarios. 

The first scenario corresponds to a situation in which a fixed number 
of machines $m$  receive more observations as $n$ increases, but the dimension
of the signal increases as well. Specifically, we take 
 $d = n^{\frac{2}{3}}$ and $m = 500$. 
 We set the signal strength slightly above the detection limit for this case, namely
 $\rho = \log(d) (d m)^{1/4} /\sqrt{n}$. Note that  $\rho \to 0$ as $n \to \infty$. 
 In view of the theory we expect that  the test $T_{m \le d}$ 
 detects the signal consistently, whereas the public coin test $T_{m \ge d}$ should 
 have a worse performance,  since it requires that $\rho \gtrsim \sqrt{d/n}$. 

In the second scenario we take $m  = n/10$ and $d=5$ and set
$\rho = 2 \log(m) \sqrt{d/n}$,  which is again slightly above the detecting limit. 
This corresponds to the situation in which for each $10$ additional observations
a new machine is added,  whilst the dimension of the signal remains fixed. 
In this case we also have $\rho \to 0$ as $n \to \infty$.  We expect 
that for large $n$ the test $T_{m \le d}$ will fail to consistently detect the signal,  
as it requires $\rho \gtrsim {(md)^{1/4}}/{\sqrt{n}}$ to do so.

The resulting plots can be found in Figure \ref{fig: n m d seq}. 
The results are again in accordance with our theoretical findings.
We note in particular that  the public coin test $T_{m \ge d}$ 
is a suitable choice in the realistic scenario where the number of machines $m$ 
scales with $n$, so batches of additional observations are distributed over additional machines,
while the dimension of the signal  remains fixed or at least relatively small compared to the number of observations.

\section{Proof of Theorem \ref{thm_upper_bound}}\label{sec: proof:thm:UB}

Let $\alpha \in (0,1)$ be given, recall that $\bar{\kappa}_\alpha^2 = 3\log \left(4/\alpha \right)$, $\tilde{\kappa}_\alpha^2 = (1/3) \log \left(16/\alpha \right)$, $\kappa_{\alpha}^2=4/\alpha$, and set
\begin{align}
M_{\alpha}&:=\max\left\{ (2^5 5 \bar{\kappa}_{\alpha})^2,\bar{D},36\tilde{\kappa}_\alpha^2,4e^2\pi \tilde{\kappa}_{\alpha}^2\alpha^{1/2}\right\},\nonumber\\
C_\alpha&:=\max\left\{  2^4 M_{\alpha}^2 \kappa_{\alpha}^2, 2(1+\sqrt{2})\kappa_{\alpha}M_{\alpha},80\bar{\kappa}_{\alpha},2^{12}e^2\tilde{\kappa}_{\alpha}^2\alpha^{-5/2} \right\},\label{def:M_alpha:C_alpha}
\end{align}
where $\bar{D}$ is defined in Lemma \ref{claim_larger_than_half_lemma}. Then we show below that for all $d,m,n \in \N$ at least one of the three distributed tests $T_{m\leq d},T_{m \geq d},T_{m \asymp 1},$ given in \eqref{def_global_test_m_is_1}-\eqref{def_global_test_large_m}, achieves uniform consistency at level $\alpha$, i.e. 
\begin{equation}\label{consistency_display}
\mathcal{R} (H_\rho, T) \leq \alpha,
\end{equation}
for some $T \in\{ T_{m \asymp 1},T_{m\leq d},T_{m \geq d}\}$.

We distinguish three regimes in view of the interplay of $m$ and $d$, and show in the following subsections that the corresponding tests reach uniform consistency:
\begin{itemize}
\item \textbf{Case 1:} $ M_\alpha<m \leq d$.
\item \textbf{Case 2:} $ M_\alpha\vee d <m$.
\item \textbf{Case 3:} $m \leq M_\alpha$.
\end{itemize}
with corresponding tests $T_{m\leq d},T_{m \geq d},T_{m \asymp 1}$, respectively. 

\subsection{Case 1: $ M_\alpha<m \leq d$}\label{sec: case1}
In view of Chernoff's bound, see Lemma \ref{lem:chernoff}, for $m>  4\bar{\kappa}_\alpha^2 $,
\begin{align}
P_0 \left( T_{m \leq d} = 1 \right)&= P_0 \left(\sum_{j=1}^m (T_{m \leq d}^{(j)}-1/2) \geq \bar\kappa_{\alpha}/\sqrt{m} \right)\nonumber\\
& \leq 2 \exp\left(-2 \bar{\kappa}_\alpha^2/3 \right)< \alpha/2.\label{type_I_error_bound_small_m_test}
\end{align}

We now turn to bounding the type II error probability. For arbitrary $\mu =(\mu_1,\dots,\mu_d) \in H_\rho$,
\begin{equation}\label{Tj_under_Pf}
S^{(j)}_{m \leq d}=\underset{i=1}{\overset{d}{\sum}} \left( \sqrt{\frac{n}{m}}\mu_{i}+ Z_{i}^{(j)} \right)^{2}
\end{equation}
with $Z_{i}^{(j)}\stackrel{iid}{\sim}N(0,1)$, follows a noncentral chi-square distribution with $d$ degrees of freedom and noncentrality parameter $\delta:= (n/m)\| \mu\|_2^2$, i.e. $S^{(j)}_{m \leq d}\stackrel{ind}{\sim} \chi_{d}^2(\delta)$.

Let us take independent random variables $V_d^{\delta}\sim \chi_{d}^2(\delta)$ and $U_d\sim \chi_{d}^2(0)$ . Then in view of Lemma \ref{claim_larger_than_half_lemma}, for $c>1/40$ and for all $d\geq \bar{D}$,
\begin{align*}
E_\mu (F_{\chi^2_d} (S^{(j)}_{m \leq d})) = \text{Pr}\left(  V^{\delta}_d \geq U_d \right)\geq \frac{1}{2} + c \left(\frac{\delta}{\sqrt{d}} \wedge \frac{1}{2} \right),
\end{align*}
where $E_\mu$ is the expectation corresponding to $P_{\mu}$. This further yields
\begin{align*}
P_\mu & \left(  \left| \underset{j=1}{\overset{m}{\sum}} \left[T^{(j)}_{m \leq d} - \frac{1}{2}\right] \right|   \leq \sqrt{m}\bar{\kappa}_\alpha \right) \\ 
& \leq P_\mu\left( \underset{j=1}{\overset{m}{\sum}} \left[T^{(j)}_{m \leq d} - E_\mu (F_{\chi^2_d} (S^{(j)}_{m \leq d})) \right]  \leq \sqrt{m}\bar{\kappa}_\alpha \left( 1 - \frac{c\sqrt{m}}{\bar{\kappa}_\alpha}   \left(\frac{\delta}{\sqrt{d}} \wedge  \frac{1}{2} \right) \right) \right).
\end{align*}
In view of Chernoff's bound and the inequality $E_\mu (F_{\chi^2_d} (T^{(j)}_{m \leq d}))\leq 1$, this is further bounded by $2\exp(-\bar{\kappa}_\alpha^2 /3)=\alpha/2$, given that
\begin{equation}\label{chernoff_enabler}
\frac{c\sqrt{m}}{\bar{\kappa}_\alpha}   \left(\frac{\delta}{\sqrt{d}} \wedge  \frac{1}{2} \right) \geq 2.
 \end{equation}
This last inequality follows from the assumption \eqref{rho_seq_bound_that_is_uniformly_testable} 
\begin{align*}
 \sqrt{\frac{m}{d}}\delta = \frac{n\| \mu \|_2^2}{\sqrt{dm}} \geq C_{\alpha}\geq  2 \bar{\kappa}_\alpha/c\quad \text{and}\quad m > 16 \bar{\kappa}_\alpha^2/c^2.
\end{align*}

\subsection{Case 2: $ M_\alpha\vee d <m$}\label{sec: case2}

For the choice  and $m >  4\tilde\kappa_\alpha^2 $ the same bound as in \eqref{type_I_error_bound_small_m_test} holds for $P_0 \left( T_{m \geq d} = 1 \right)$. 

For the type II error, consider $\mu = (\mu_1,\dots,\mu_d) \in H_\rho$. Define for $\eta >0$, $\nu >0$ the events $D_\eta := \{ \|U\|_2/\sqrt{d} \leq \eta \}$ and
\begin{equation*}
A_{\nu,\eta} := \left \{ \frac{\sqrt{n}}{\sqrt{d}}\underset{i=1}{\overset{d}{\sum}} \mu_i U_i > \nu \right \}\cap D_\eta, \; B_{\nu,\eta} := \left \{ \frac{\sqrt{n}}{\sqrt{d}}\underset{i=1}{\overset{d}{\sum}} \mu_i U_i < -\nu\right \} \cap D_\eta,
\end{equation*} 
where $U=(U_1, \dots,U_d)\sim N(0,I_d)$ is the public coin random vector, whose probability distribution we shall denote by $Q$. We set
\begin{align}
\eta =4 \alpha^{-1/2} \quad \text{and} \quad  \nu = 4e\sqrt{2\pi}\tilde\kappa_\alpha \sqrt{\eta} ,  \label{def:eta_nu}
\end{align}
and note that $C_{\alpha} \geq 2^7 \nu^2/(\pi \alpha^2)$.

Then the type II error is bounded from above as
\begin{align}
E_Q P_\mu \left( T_{m \geq d} = 0|U=u \right) &\leq  \int_{A_{\nu,\eta}} P_\mu \left(  \underset{j=1}{\overset{m}{\sum}} (T^{(j)}_{m \geq d} - 1/2)  \leq  \sqrt{m}\tilde\kappa_\alpha \bigg| U = u  \right) dQ(u)\nonumber \\ 
&+ \int_{ B_{\nu,\eta}} P_\mu \left(  \underset{j=1}{\overset{m}{\sum}} (T^{(j)}_{m \geq d} - 1/2)  \geq  - \sqrt{m}\tilde\kappa_\alpha \bigg| U = u  \right) dQ(u)\nonumber \\
 &\qquad+ Q  (A^c_{\nu,\eta} \cap  B_{\nu,\eta}^c).\label{eq: type_II_Case3}
\end{align}
We show below that each of the first two terms on the right hand side are bounded by $\alpha/8$ and the third term by $\alpha/4$, resulting our statement.

First we deal with the third term in \eqref{eq: type_II_Case3}. Since
\begin{equation*}
\frac{\sqrt{n}}{\sqrt{d}}\underset{i=1}{\overset{d}{\sum}} \mu_i U_i \sim \mathcal{N}(0, \frac{n}{d} \|\mu\|_2^2),
\end{equation*}
the set $(A_{\nu,\eta} \cup B_{\nu,\eta})^c = A_{\nu,\eta}^c \cap B_{\nu,\eta}^c$, in view of Chebyshev's inequality, assumption \eqref{rho_seq_bound_that_is_uniformly_testable}, and definitions \eqref{def:eta_nu}, satisfies that
\begin{align*}
Q( A_{\nu,\eta}^c \cap B_{\nu,\eta}^c) &\leq \text{Pr} \left( \sqrt{\frac{n}{d}} \|\mu\|_2 |Z| \leq \nu \right) + Q(D_\eta^c) 
\leq \frac{2\nu}{\sqrt{2\pi C_{\alpha}}} + \frac{2}{\eta^2}\leq \alpha/4.
\end{align*}
where $Z\sim N(0,1)$.

Next we deal with the first term on the right hand side of \eqref{eq: type_II_Case3}. For $u\in  A_{\nu,\eta}$, we have $\|u\|_2/\sqrt{d} \leq \eta$, hence for $m\geq \nu^2/(2\eta^2)$,
\begin{align*}
\text{Pr} \left( - \frac{\nu}{\sqrt{m}} < \frac{1}{\sqrt{d}} \underset{i=1}{\overset{d}{\sum}}  u_i Z_i  \leq 0 \right) &\geq \text{Pr} \left( - \frac{\nu}{\sqrt{ m}\eta} < Z  \leq 0 \right) \\
&\geq\frac{e^{-\frac{\nu^2}{2m\eta^2}}}{\sqrt{2\pi}}  \frac{\nu}{\sqrt{ m}\eta}\geq c' \frac{\nu}{\sqrt{ m}\eta},
\end{align*}
where $Z,Z_1,\dots,Z_d\stackrel{iid}{\sim}N(0,1)$ and $c'=(e\sqrt{2\pi})^{-1} $. Therefore,
\begin{align*}
 P_\mu \left (  S_{m \geq d}^{(j)} \geq 0 \bigg| U = u \right)
&=   \text{Pr} \left( \frac{\sqrt{n}}{\sqrt{dm}}\underset{i=1}{\overset{d}{\sum}} \mu_i u_i + \frac{1}{\sqrt{d}} \underset{i=1}{\overset{d}{\sum}}  u_i Z_i  \geq 0 \right) \\
&\geq  \text{Pr} \left( \frac{\nu}{\sqrt{m}} + \frac{1}{\sqrt{d}} \underset{i=1}{\overset{d}{\sum}}  u_i Z_i  \geq 0 \right) \\
&= \frac{1}{2} + \text{Pr} \left( - \frac{\nu}{\sqrt{m}} < \frac{1}{\sqrt{d}} \underset{i=1}{\overset{d}{\sum}}  u_i Z_i  \leq 0 \right) \\
&\geq  \frac{1}{2} + c' \frac{\nu}{\sqrt{\eta m}}.
\end{align*}
Then in view of Chernoff's bound 
\begin{align*}
\int_{A_\nu}& P_\mu \left(  \underset{j=1}{\overset{m}{\sum}} (T^{(j)}_{m \geq d}  - 1/2)  \leq  \sqrt{m}\tilde\kappa_\alpha \bigg| U= u \right) d Q(u)  \\ 
&\leq 
 \int_{A_\nu} P_\mu \left(  \underset{j=1}{\overset{m}{\sum}} \left[T^{(j)}_{m \geq d} - E_\mu( T^{(j)}_{m \geq d} |U=u)\right]  \leq  \sqrt{m}\big(\tilde\kappa_\alpha -\frac{c' \nu}{\sqrt{\eta}}\big) \bigg| U= u \right) d Q(u).
\end{align*}
Since by definition \eqref{def:eta_nu} $\tilde\kappa_\alpha- c'\nu/\sqrt{\eta}=-3\tilde\kappa_\alpha$, in view of Chernoff's bound (Lemma \ref{lem:chernoff}), for $36 \tilde\kappa_\alpha^2 < m$  the preceding display is further bounded by $2e^{ - 3 \tilde\kappa_\alpha^2}=\alpha /8$.

We can deal with the second term on the right hand side of \eqref{eq: type_II_Case3} similarly. First, we obtain for $u\in  B_{\nu,\eta}$ that
\begin{equation*}
P_\mu \left (  S_{m \geq d}^{(j)} \geq 0 \bigg| U = u \right)  \leq 1/2 - c' \frac{\nu}{\sqrt{\eta m}}.
\end{equation*}
And then we can derive as above that
\begin{equation*}
E_Q \mathbbm{1}_{B_\nu} P_\mu \left(  \underset{j=1}{\overset{m}{\sum}} \left[T^{(j)}_{m \geq d} - \frac{1}{2}\right]  \geq  - \sqrt{m}\tilde\kappa_\alpha  \right) \leq \alpha /8,
\end{equation*}
concluding the proof of case 2.

\subsection{Case 3: $m \leq M_\alpha$}\label{sec: case3}
Finally, for completeness, we deal with the case when the number of machines is limited and we are back bascially to the single server, centralized case from a minimax rate point of view.

Under the null hypothesis, $(n/m) \|X^{(1)}\|_2^2$ follows a chi-square distribution with $d$ degrees of freedom, so by Chebyshev's inequality
\begin{equation*}
 P_0 \left( T_{m \asymp 1} = 1 \right) \leq 2/\kappa_\alpha^2 = \alpha/2.
\end{equation*}

 For any $\mu \in H_{\rho}$, it holds that
\begin{align*}
P_\mu \big( \frac{n}{\sqrt{d}m}& \|X^{(1)}\|_2^2 - \sqrt{d} \leq\kappa_\alpha \big)\\
& = \text{Pr}\left( \left| \frac{n}{\sqrt{d} m} \|\mu\|_2^2 + 2d^{-1/2} \underset{i=1}{\overset{d}{\sum}} \mu_i Z_i + d^{-1/2} \underset{i=1}{\overset{d}{\sum}} (Z_i^2 - 1) \right| \leq \kappa_\alpha \right)
\end{align*}
where $Z_i\stackrel{iid}{\sim} N(0,1)$, $i=1,...,d$. By the reverse triangle inequality, we can bound the previous probability with
\begin{align*}
\text{Pr}\left( \left|  2d^{-1/2} \underset{i=1}{\overset{d}{\sum}} \mu_i Z_i \right| \geq \frac{n\|\mu\|_2^2}{2\sqrt{d} m}  \right) + \text{Pr}\left(  \left|  d^{-1/2} \underset{i=1}{\overset{d}{\sum}} (Z_i^2 - 1)  \right| \geq   \frac{n \|\mu\|_2^2}{2 \sqrt{d}  m} -\kappa_\alpha  \right).
\end{align*}

In view of Chebyshev's inequality and $\|\mu\|_2^2 \geq  \rho^2 \geq C_{\alpha} \sqrt{d}/n$,
\begin{equation*}
\text{Pr}\left( \left|  2d^{-1/2} \underset{i=1}{\overset{d}{\sum}} \mu_i Z_i \right| \geq \frac{n\|\mu\|_2^2}{2\sqrt{d} m}  \right)\leq \frac{16 m^2}{n^2 \|\mu\|_2^2} \leq  \frac{16 m^2}{C_{\alpha} } \leq \frac{\alpha}{4}
\end{equation*}
for $C_\alpha \geq 2^6 M_\alpha^2/\alpha$. Likewise, for $C_\alpha\geq {2(1+\sqrt{2}) \kappa_\alpha M_\alpha}$
\begin{equation*}
  \frac{n \|\mu\|_2^2}{2 \sqrt{d}  m} - \kappa_\alpha \geq  \sqrt{2} \kappa_\alpha
\end{equation*}
and consequently, by applying again Chebyshev's inequality, the second term is bounded by
\begin{equation*}
\text{Pr}\left(  \left|  d^{-1/2} \underset{i=1}{\overset{d}{\sum}} (Z_i^2 - 1)  \right| \geq \sqrt{2} \kappa_\alpha   \right) \leq \frac{1 }{\kappa_\alpha^2} \leq \frac{\alpha}{4}.
\end{equation*}
Hence we can conclude that for $C_{\alpha}\geq \max\{2^6 M_\alpha^2/\alpha,   {2(1+\sqrt{2}) \kappa_\alpha M_\alpha}\}$,
\begin{equation*}
P_\mu \big(T_{m \asymp 1}=0 \big)\leq \alpha/2,
\end{equation*}
concluding the proof of the theorem.

\section{Lemmas for Theorem \ref{thm_upper_bound}}\label{sec: thm_upper_bound}
In this section we collect the proofs of the lemmas used to derive the minimax testing upper bound.

\begin{lemma}\label{claim_larger_than_half_lemma}

Let $U_d$ and $V_d^{\delta_d}$ be independent chi-square distributed random variables with $d$ degree of freedom and non-centrality parameters zero and $\delta_d>0$, respectively. Then for a universal $\bar{D}\in\mathbb{N}$, not depending on $\delta_d$, we have for all $d\geq \bar{D}$ that
\begin{equation}\label{lemma_noncent_cent_chisq}
\text{Pr}\left(  V_d^{\delta_d} - U_d  \geq 0 \right) \geq \frac{1}{2} + \frac{1}{40}( \frac{{\delta_d}}{\sqrt{d}} \wedge \frac{1}{2}).
\end{equation}

\end{lemma}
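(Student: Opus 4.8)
The plan is to show that $\mathrm{Pr}(V_d^{\delta_d} \geq U_d) \geq 1/2 + c\,(\delta_d/\sqrt d \wedge 1/2)$ by writing $V_d^{\delta_d} - U_d$ as a sum over $d$ independent coordinates and applying a Berry--Esseen-type normal approximation, then extracting the mean shift. Concretely, I would realize $V_d^{\delta_d} = \sum_{i=1}^d (Z_i + a_i)^2$ with $Z_i \stackrel{iid}{\sim} N(0,1)$ and $\sum a_i^2 = \delta_d$, and $U_d = \sum_{i=1}^d \tilde Z_i^2$ with $\tilde Z_i \stackrel{iid}{\sim} N(0,1)$ independent of the $Z_i$. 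Then $W := V_d^{\delta_d} - U_d = \sum_{i=1}^d \xi_i$ where $\xi_i = (Z_i+a_i)^2 - \tilde Z_i^2$ are independent with $\mathbb{E}\xi_i = a_i^2$, so $\mathbb{E}W = \delta_d$, and $\mathrm{Var}(\xi_i) = 2 + 2 + 4a_i^2 = 4 + 4a_i^2$, giving $\sigma^2 := \mathrm{Var}(W) = 4d + 4\delta_d$. The quantity of interest is $\mathrm{Pr}(W \geq 0)$.

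The key computation is then $\mathrm{Pr}(W \geq 0) = \mathrm{Pr}\big( (W - \delta_d)/\sigma \geq -\delta_d/\sigma \big)$. By the Berry--Esseen theorem applied to the standardized sum of the independent (but not identically distributed) $\xi_i$, this is at least $\Phi(\delta_d/\sigma) - C_{BE}\,\rho_3/\sigma^3$, where $\rho_3 = \sum_i \mathbb{E}|\xi_i - a_i^2|^3$. One checks $\mathbb{E}|\xi_i - a_i^2|^3 \leq K(1 + a_i^2)^{3/2}$ for a universal $K$ (third absolute moments of shifted $\chi^2$'s), so $\rho_3 \lesssim d + \delta_d^{3/2} + \dots$, while $\sigma^3 \asymp (d+\delta_d)^{3/2}$; hence the Berry--Esseen error is $O(1/\sqrt d)$ uniformly in $\delta_d$, and can be made smaller than, say, $1/200$ once $d \geq \bar D$ for a universal $\bar D$. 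Meanwhile $\Phi(\delta_d/\sigma) - 1/2 \geq \frac{1}{\sqrt{2\pi}} \cdot \frac{\delta_d/\sigma}{1 + (\delta_d/\sigma)^2/2} \cdot$(something) — more simply, using $\Phi(x) - 1/2 \geq \phi(x)\,x/(1+x^2) \geq c_0 (x \wedge 1)$ for a universal $c_0$, and noting $\delta_d/\sigma = \delta_d/(2\sqrt{d+\delta_d}) \asymp (\delta_d/\sqrt d) \wedge \sqrt{\delta_d}$, one gets $\Phi(\delta_d/\sigma) - 1/2 \gtrsim (\delta_d/\sqrt d) \wedge 1$. Combining, $\mathrm{Pr}(W\geq 0) \geq 1/2 + c'\,(\delta_d/\sqrt d \wedge 1) - 1/200 \geq 1/2 + \frac{1}{40}(\delta_d/\sqrt d \wedge 1/2)$ after choosing constants; the case split on whether $\delta_d/\sqrt d$ is small or large is handled by the $\wedge$.

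The main obstacle is making the Berry--Esseen error genuinely \emph{uniform in $\delta_d$}: as $\delta_d$ grows the third-moment ratio $\rho_3/\sigma^3$ could in principle blow up if the $a_i^2$ are very unequal (e.g.\ all mass on one coordinate). I would handle this by treating two regimes — when $\delta_d \leq d$ the bound $\rho_3 \lesssim d$, $\sigma^3 \gtrsim d^{3/2}$ is clean; when $\delta_d > d$ one needs the slightly more careful estimate $\sum_i (1+a_i^2)^{3/2} \leq (\sum_i (1+a_i^2))^{3/2} = (d+\delta_d)^{3/2}$ (by concavity/power-mean, using that there are only $d$ terms is not even needed here), so $\rho_3/\sigma^3 \lesssim 1$, which is not $o(1)$; so in the large-$\delta_d$ regime I would instead argue directly that $\mathrm{Pr}(W\geq 0) \geq 1/2 + 1/40 \cdot 1/2$ is automatic because $\mathbb{E}W = \delta_d$ is so large relative to $\sigma = 2\sqrt{d+\delta_d}$ that $\delta_d/\sigma \to \infty$, and a one-sided Chebyshev (Cantelli) bound $\mathrm{Pr}(W < 0) \leq \sigma^2/(\sigma^2 + \delta_d^2) \leq 4(d+\delta_d)/\delta_d^2$ is already below $1/40$ once $\delta_d \geq \sqrt d \cdot$const. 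That is, for $\delta_d/\sqrt d$ large we use Cantelli, for $\delta_d/\sqrt d$ bounded we use Berry--Esseen with its error controlled by $d \geq \bar D$ alone; the two regimes overlap, covering all $\delta_d > 0$.
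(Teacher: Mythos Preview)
Your plan works for $\delta_d$ bounded away from zero, but has a genuine gap when $\delta_d$ is small. Berry--Esseen gives
\[
\mathrm{Pr}(W\ge 0)\;\ge\;\Phi(\delta_d/\sigma)\;-\;C\rho_3/\sigma^3
\;\ge\;\tfrac12 + c'\,\frac{\delta_d}{\sqrt d}\;-\;\frac{C'}{\sqrt d},
\]
so the additive error $C'/\sqrt d$ is of the \emph{same} order $1/\sqrt d$ as the gain, multiplied by $\delta_d$. For $\delta_d\to 0$ the error dominates and the right-hand side drops below $1/2$, whereas the lemma demands $\mathrm{Pr}(W\ge 0)\ge \tfrac12+\tfrac{1}{40}\,\delta_d/\sqrt d$ for \emph{every} $\delta_d>0$ once $d\ge\bar D$. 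Your final inequality ``$1/2+c'(\delta_d/\sqrt d\wedge 1)-1/200\ge 1/2+\tfrac{1}{40}(\delta_d/\sqrt d\wedge 1/2)$'' simply fails when $\delta_d/\sqrt d$ is small (take $\delta_d/\sqrt d=10^{-6}$). Monotonicity in $\delta_d$ does not rescue this: it lets you cap $\delta_d$ from above, but the difficult regime is small $\delta_d$, where the target scales linearly in $\delta_d$ and you cannot afford any additive slack independent of $\delta_d$.

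The paper avoids exactly this problem by working at the density level rather than the CDF level. It writes $V_d^{\delta_d}$ via its Poisson mixture representation $\sum_{k\ge 0}\frac{e^{-\delta_d/2}(\delta_d/2)^k}{k!}\,\chi^2_{d+2k}$, so that
\[
\mathrm{Pr}(V_d^{\delta_d}\ge U_d)-\tfrac12
\;\ge\;\tfrac12\sum_{k\ge 0}\frac{e^{-\delta_d/2}(\delta_d/2)^k}{k!}\,
\mathrm{Pr}\!\Big(0\le \tfrac{U_d-U_d'}{2\sqrt d}\le \tfrac{k}{3\sqrt d}\Big),
\]
and then uses a local limit theorem (uniform convergence of the density of $(U_d-U_d')/(2\sqrt d)$ to the standard normal density) to get $\mathrm{Pr}(0\le W_d\le k/(3\sqrt d))\ge c\,k/\sqrt d$. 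The Poisson weights then produce a factor of $\delta_d$, giving the gain $c\,\delta_d/\sqrt d$ directly, with no subtracted error. If you want to salvage your route, you would need to replace Berry--Esseen by a local limit theorem for the density of $W/(2\sqrt d)$ near $0$ (which gives a multiplicative rather than additive error), or else restrict to $\delta_d\ge c_0$ and argue separately --- but the latter does not prove the lemma as stated. Incidentally, since the noncentral $\chi^2_d(\delta_d)$ law depends only on $\sum a_i^2=\delta_d$, you are free to take $a_i^2=\delta_d/d$; this makes the summands i.i.d.\ and gives $\rho_3/\sigma^3=O(1/\sqrt d)$ uniformly in $\delta_d$, so your two-regime split (and the Cantelli step) is unnecessary --- but this simplification does not remove the small-$\delta_d$ obstruction.
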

\begin{proof}

First note that the function $\delta \mapsto \text{Pr}\left(  V_d^{\delta} - U_d  \geq 0 \right)$ is  monotone increasing. Then
$$\text{Pr}\left(  V_d^{\delta_d} - U_d  \geq 0 \right) \geq \text{Pr}\left(  V_d^{\delta_d\wedge \sqrt{d}/2} - U_d  \geq 0 \right),$$
so without loss of generality we can assume that $\delta_d \leq \sqrt{d}/2$.

The density of $V_d^{\delta_d}$ is 
\begin{equation*}
\underset{k=0}{\overset{\infty}{\sum}} \frac{e^{- {\delta_d}/2} \left( {\delta_d}/2\right)^k}{k!} p_{d+2k},
\end{equation*}
where $p_k$ denotes the $\chi^2_k$-density.
By the independence of $U_d$ and $V_d^{\delta_d}$,
\begin{align*}
\text{Pr}\left(  V_d^{\delta_d} - U_d  \leq 0 \right) &= \underset{k=0}{\overset{\infty}{\sum}} \frac{e^{- {\delta_d}/2} \left( {\delta_d}/2\right)^k}{k!} \int_{\{ v-u \geq 0\}} p_{d+2k}(v) p_d (u) d(v,u).
\end{align*}
Let $U_d' \sim \chi^2_d$ and $U_{2k}'' \sim \chi^2_{2k}$ be independent from each other and $U_d$. For any given $k \in \N$, we have
\begin{align*}
\int_{\{ v-u \geq 0\}} p_{d+2k}(v) p_d (u) d(v,u) =\text{Pr} \left( U_d - U_d' \leq U_{2k}'' \right).
\end{align*}

For convenience let us introduce the notation $W_d=(U_d-U_d')/(2\sqrt{d})$. Conditioning and using independence once more, the latter equals
\begin{equation*}
\int \text{Pr}\left( W_d \leq  \frac{u}{2\sqrt{d}} \right) d \P_{U_{2k}''}(u) = \frac{1}{2} + \int \text{Pr}\left( 0 \leq W_d  \leq  \frac{u}{2\sqrt{d}} \right) d \P_{U_{2k}''}(u).
\end{equation*}
Since $U_{2k}''$ has a median larger than $2k/3$ and the map $u \mapsto \text{Pr}\left( 0 \leq W_d  \leq  \frac{u}{2\sqrt{d}} \right)$ is increasing, we have that the second term in the last display satisfies
\begin{align*}
\int \text{Pr}\left( 0 \leq W_d  \leq  \frac{u}{2\sqrt{d}} \right)  p_{{2k}}(u)du &\geq \text{Pr}\left( 0 \leq W_d\leq  \frac{k}{3\sqrt{d}} \right) \int_{[\frac{2k}{3},\infty)} p_{{2k}}(u)du \\  &\geq \frac{1}{2}  \text{Pr}\left( 0 \leq W_d  \leq  \frac{k}{3\sqrt{d}} \right).
\end{align*}
By combining the above inequalities we obtain that
\begin{equation}\label{half_breakdown}
\text{Pr}\left(  V_d^{\delta_d} - U_d  \leq 0 \right) \geq \frac{1}{2} + \frac{1}{2} \underset{k=0}{\overset{\infty}{\sum}} \frac{e^{- {\delta_d}/2} \left( {\delta_d}/2\right)^k}{k!}    \text{Pr}\left( 0 \leq W_d  \leq  \frac{k}{3\sqrt{d}} \right).
\end{equation}

Assume now that $\delta_d \gtrsim 1$. Let $k_d$ be the largest integer such that $ k_d \leq 3\sqrt{d}$. We divide the sum on the right hand of the preceding display to two parts, i.e. $k< k_d$ and $k\geq k_d$. By applying Lemma \ref{chi_sq_nens_unity_lemma} with $\varepsilon_d= k$, it holds that for $c_0=e^{-9/8}/6$,
\begin{align*}
 \underset{k=0}{\overset{k_d}{\sum}} \frac{e^{- {\delta_d}/2} \left( {\delta_d}/2\right)^k}{k!}    \text{Pr}\left( 0 \leq W_d  \leq  \frac{k}{3\sqrt{d}} \right) &\geq \frac{c_0 }{ \sqrt{d}} \underset{k=1}{\overset{k_d}{\sum}} \frac{e^{- {\delta_d}/2} \left( {\delta_d}/2\right)^{k}}{(k-1)!} \\
&\geq \frac{c_0 {\delta_d}}{2 \sqrt{d}} \underset{k=0}{\overset{k_d-1}{\sum}} \frac{e^{- {\delta_d}/2} \left( {\delta_d}/2\right)^{k}}{k!}.
\end{align*}
We have $\text{Pr}\left( 0 \leq W_d \leq 1 \right) \stackrel{d}{\to} \text{Pr}\left( 0 \leq Z \leq  1 \right) > 1/3$, hence there exists a $D_1\in\mathbb{N}$, such that for all $d\geq D_1$ we have $\text{Pr}\left( 0 \leq W_d \leq 1 \right)> 1/3$. For $k > k_d$ we have $k > 3\sqrt{d}$, hence for all $d\geq D_1$,
\begin{align*}
 \underset{k> k_d}{\overset{\infty}{\sum}} \frac{e^{- {\delta_d}/2} \left( {\delta_d}/2\right)^k}{k!}    \text{Pr}\left( 0 \leq W_d  \leq  \frac{k}{3\sqrt{d}} \right) &\geq \frac{c_0}{2} \underset{k> k_d}{\overset{\infty}{\sum}} \frac{e^{- {\delta_d}/2} \left( {\delta_d}/2\right)^k}{k!}. 
\end{align*}
Since ${\delta_d}/\sqrt{d}\leq 1/2$, we have for $d\geq D_1$,
\begin{align*}
\frac{1}{2} \underset{k=0}{\overset{\infty}{\sum}} \frac{e^{- {\delta_d}/2} \left( {\delta_d}/2\right)^k}{k!}    \text{Pr}\left( 0 \leq W_d  \leq  \frac{k}{3\sqrt{d}} \right) \geq 
\frac{c_0 {\delta_d}}{2 \sqrt{d}} (1- \frac{e^{-{\delta_d}/2} \left({\delta_d}/2\right)^{k_d}}{k_d!} ).
\end{align*}

The proof is finished by showing that for large enough $d$ we have $c_0/2-1/40>  ({\delta_d}/2)^{k_d}/k_d! >0$. Recalling that $2\sqrt{d}\leq 3\sqrt d-1\leq k_d \leq3\sqrt d$ and hence $\delta_d\leq \sqrt{d}/2\leq \sqrt{d}/4$ we get in view of Stirling's inequality, that for some universal constant $C>0$
\begin{align*}
\frac{\left({\delta_d}/2\right)^{k_d}}{k_d!} \leq \frac{\left(k_d / 4\right)^{k_d}}{k_d!} \lesssim e^{k_d(1 - \log 4)} k_d^{-1/2},
\end{align*}
which is bounded from above by $c_0/2-1/40$ for $d\geq D_1$, for some sufficiently large $D_1>0$.

\end{proof}

\begin{lemma}\label{chi_sq_nens_unity_lemma}
Let $U_d,U_d'\stackrel{iid}{\sim} \chi^2_d$, and $0<\varepsilon_d\leq C\sqrt{d}$. Then there exists a large enough $D_0\in\mathbb{N}$, such that for all $d\geq D_0$
\begin{equation*}
\text{Pr}\left( 0 \leq \frac{U_d - U_d'}{2\sqrt{d}} \leq  \frac{\varepsilon_d }{\sqrt{d}} \right) \geq  \frac{e^{-C^2/8}}{6}  \frac{\varepsilon_d }{\sqrt{d}}. 
\end{equation*}
\end{lemma}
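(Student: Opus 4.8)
The plan is to reduce the statement to a local central limit theorem for the normalized difference $W_d := (U_d - U_d')/(2\sqrt d)$, combined with an elementary lower bound on the mass a standard Gaussian puts on a short interval $[0,t]$.

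First I would write $W_d = S_d/(2\sqrt d)$ with $S_d = \underset{i=1}{\overset{d}{\sum}} \xi_i$, where $\xi_i := Y_i - Y_i'$ and $Y_i, Y_i' \stackrel{iid}{\sim} \chi^2_1$, so that the $\xi_i$ are i.i.d.\ with mean $0$ and variance $4$; in particular $2\sqrt d = \sqrt{\operatorname{Var}(S_d)}$. The structural fact that drives the argument is that $\xi_1 + \xi_2 = (Y_1 + Y_2) - (Y_1' + Y_2')$ is the difference of two independent $\chi^2_2$ variables, i.e.\ of two independent exponentials with mean $2$, and hence follows a Laplace distribution, which has a bounded density. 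Consequently $S_d$ has a bounded density for every $d \ge 2$, and the classical local central limit theorem for sums of i.i.d.\ random variables applies: writing $f_{W_d}$ for the density of $W_d$ and $\phi$ for the standard normal density, it yields $\gamma_d := \underset{x \in \mathbb{R}}{\sup}\, |f_{W_d}(x) - \phi(x)| \to 0$ as $d \to \infty$.

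Next I would record the elementary estimate that, since $\phi$ is nonincreasing on $[0,\infty)$ and $t \wedge (C/2) \ge t/2$ whenever $0 < t \le C$,
\[
\int_0^t \phi(x)\,dx\ \ge\ \big(t\wedge\tfrac C2\big)\,\phi\big(t\wedge\tfrac C2\big)\ \ge\ \tfrac t2\,\phi(C/2)\ =\ \frac{e^{-C^2/8}}{2\sqrt{2\pi}}\,t.
\]
Combining this with the local CLT bound, and using that $W_d$ has a density, I get for every $0 < t \le C$,
\[
\text{Pr}(0 \le W_d \le t) = \int_0^t f_{W_d}(x)\,dx \ge \int_0^t \big(\phi(x) - \gamma_d\big)\,dx \ge t\Big(\frac{e^{-C^2/8}}{2\sqrt{2\pi}} - \gamma_d\Big).
\]
Since $\tfrac{1}{2\sqrt{2\pi}} - \tfrac16 > 0$, I can choose $D_0 = D_0(C)$ large enough that $\gamma_d \le e^{-C^2/8}\big(\tfrac{1}{2\sqrt{2\pi}} - \tfrac16\big)$ for all $d \ge D_0$; for such $d$ and all $0 < t \le C$ the last display is then at least $\tfrac{e^{-C^2/8}}{6}\,t$. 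Applying this with $t = \varepsilon_d/\sqrt d \in (0,C]$ gives the claim, and one should note that $D_0$ depends only on $C$ and not on the particular sequence $(\varepsilon_d)$, which is what the downstream applications (e.g.\ inside the sum over $k$ in Lemma \ref{claim_larger_than_half_lemma}) require.

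The hard part is exactly the local CLT input, i.e.\ passing from the weak convergence $W_d \Rightarrow N(0,1)$ to uniform convergence of the densities; the Laplace identity for $\xi_1 + \xi_2$ is precisely what verifies the hypothesis (``some convolution power has a bounded density'') under which the uniform local limit theorem holds. Everything else is a routine computation. As an alternative one could treat the regime where $\varepsilon_d$ stays bounded below by a constant depending on $C$ via a Berry--Esseen bound on the distribution function $F_{W_d}$ together with $F_{W_d}(0) = 1/2$ (which holds by symmetry of $W_d$), and invoke the density estimate only in the regime of bounded $\varepsilon_d$; but routing everything through the local CLT is cleaner.
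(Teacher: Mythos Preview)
Your argument is correct. Both proofs arrive at the same intermediate conclusion---uniform convergence $\sup_x |f_{W_d}(x) - \phi(x)| \to 0$ of the density of $W_d$ to the standard normal density---and then lower-bound $\Pr(0\le W_d\le t)$ via an elementary estimate on the Gaussian mass of $[0,t]$. The difference lies in how that uniform density convergence is obtained. The paper computes the characteristic function $\phi_d(t) = (1+t^2/d)^{-d/2}$ explicitly, applies Fourier inversion, and passes to the limit by dominated convergence; this is entirely self-contained. You instead exploit the i.i.d.\ decomposition $W_d = \sum_i \xi_i/(2\sqrt d)$, observe that the two-fold convolution $\xi_1+\xi_2$ is Laplace (hence has bounded density), and invoke the classical local CLT as a black box. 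Your route is more conceptual and names the general principle at work; the paper's route avoids any external citation and is in effect a direct proof of the local CLT in this special case. Neither is obviously preferable. Your remark that $D_0$ depends only on $C$ and not on the particular sequence $(\varepsilon_d)$ is a useful clarification for the application inside the sum over $k$ in Lemma~\ref{claim_larger_than_half_lemma}.
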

\begin{proof}
The characteristic function of the random variable $W_d := (U_d - U_d')/(2\sqrt{d})$ is
\begin{align*}
\phi_d(t) &= \E e^{i tW_d}= \E e^{i \frac{t}{2\sqrt{d}} U_d} \E e^{-i \frac{t}{2\sqrt{d}} U_d'} \\
 &= ( 1 + i t/\sqrt{d} )^{-d/2} ( 1 - i t/\sqrt{d} )^{-d/2}\\
&= ( 1 + t^2/d)^{-d/2}\stackrel{d\rightarrow\infty}{\longrightarrow}e^{-t^2/2}. 
\end{align*}

Using the Fourier inversion formula, the  density $f_{W_d}$ of $W_d$ satisfies
\begin{align*}
f_{W_d } (v) &= \frac{1}{2 \pi} \int_\R e^{itv} \phi_d(t) dt = \frac{1}{2 \pi} \int_\R \cos (tv) \phi_d(t) dt,
\end{align*}
where the second equality follows from the symmetry of $\phi_d$. Let 
\begin{align*}
g(v)&:=\frac{1}{2 \pi} \int_\R \cos (tv) e^{-t^2/2}dt= \frac{1}{\sqrt{2\pi}} e^{-v^2/2},
\end{align*}
where the last equation follows for instance by contour integration. Then by the dominated convergence theorem
\begin{align*}
| f_{W_d } (v) - g(v)| &\leq \frac{1}{2 \pi} \int_\R | e^{-t^2/2} - \phi_d(t)| dt\stackrel{d\to\infty}{\longrightarrow}0.
\end{align*}

By the earlier established uniform convergence we have that for every $d\geq D_0$, for some large enough $D_0$, 
\begin{equation*}
\int_0^{\frac{\varepsilon_d }{2\sqrt{d}}} f_{W_d} (v) dv \geq \frac{1}{3} e^{-\varepsilon_d^2/(8d)} \frac{\varepsilon_d }{2\sqrt{d}}=\frac{e^{-C^2/8}}{6} \frac{\varepsilon_d }{\sqrt{d}},
\end{equation*}
where the constant $1/3$ is arbitrary and could be taken anything smaller than $1/\sqrt{2\pi}$.
\end{proof}

\begin{lemma}[Chernoff's bound]\label{lem:chernoff}
Let $B_i\stackrel{ind}{\sim} \text{Ber}(p_i)$, $i=1,...,k$, and $0<\delta<1$. Then
\begin{equation}\label{Chernoff_bound}
\text{Pr} \left( \left| \underset{i=1}{\overset{k}{\sum}} (B_i - \E B_i) \right| \geq \delta \underset{i=1}{\overset{k}{\sum}} p_i  \right) \leq 2 e^{-(\delta^2/3) \sum_{i=1}^{k}  p_i }.
\end{equation}
\end{lemma}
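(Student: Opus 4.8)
The plan is to reduce the two-sided statement to the two one-sided multiplicative Chernoff bounds for a sum of independent Bernoulli variables, each obtained by the standard exponential-moment (Cram\'er--Chernoff) method, and then to combine them with a union bound, which produces the constant $2$. Write $S=\sum_{i=1}^{k}B_i$ and $\mu=\sum_{i=1}^{k}p_i=\E S$, and observe that the event appearing in \eqref{Chernoff_bound} is the union of $\{S\ge(1+\delta)\mu\}$ and $\{S\le(1-\delta)\mu\}$, so it suffices to bound each of these two probabilities by $e^{-(\delta^{2}/3)\mu}$.

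For the upper tail, fix $t>0$ and apply Markov's inequality to $e^{tS}$; using independence of the $B_i$ together with the elementary inequality $1+x\le e^{x}$,
\[
\text{Pr}(S\ge(1+\delta)\mu)\le e^{-t(1+\delta)\mu}\,\E e^{tS}=e^{-t(1+\delta)\mu}\prod_{i=1}^{k}\bigl(1+p_i(e^{t}-1)\bigr)\le\exp\bigl((e^{t}-1)\mu-t(1+\delta)\mu\bigr).
\]
Choosing $t=\log(1+\delta)$ turns the right-hand side into $\exp\bigl(\mu(\delta-(1+\delta)\log(1+\delta))\bigr)$, so the remaining work for this tail is the deterministic inequality $\delta-(1+\delta)\log(1+\delta)\le-\delta^{2}/3$ for $0<\delta<1$. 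The lower tail is symmetric: for $t>0$, Markov applied to $e^{-tS}$ gives $\text{Pr}(S\le(1-\delta)\mu)\le\exp\bigl((e^{-t}-1)\mu+t(1-\delta)\mu\bigr)$, and the choice $t=-\log(1-\delta)$ yields $\exp\bigl(\mu(-\delta-(1-\delta)\log(1-\delta))\bigr)$, which I would bound using the scalar inequality $-\delta-(1-\delta)\log(1-\delta)\le-\delta^{2}/2\le-\delta^{2}/3$ on $(0,1)$. Adding the two tail estimates gives \eqref{Chernoff_bound}.

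The only genuine obstacle is verifying the two scalar inequalities $\delta-(1+\delta)\log(1+\delta)\le-\delta^{2}/3$ and $-\delta-(1-\delta)\log(1-\delta)\le-\delta^{2}/2$ on the interval $(0,1)$. Each follows from a routine one-variable argument: set the difference of the two sides equal to a function of $\delta$, check that it and its first derivative vanish at $\delta=0$, and verify the sign of the second derivative on $(0,1)$; equivalently, expand $(1\pm\delta)\log(1\pm\delta)$ as a power series and compare term by term against $\pm\delta+\delta^{2}/3$, respectively $\pm\delta+\delta^{2}/2$. Since these are classical facts, the lemma could equally well be stated with only the exponential-moment computation spelled out and the two scalar bounds quoted from the standard concentration-inequality literature.
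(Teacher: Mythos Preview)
Your proof is correct and follows the standard Cram\'er--Chernoff argument: exponential Markov, independence to factor the moment generating function, the bound $1+x\le e^x$, optimization in $t$, and finally the two elementary scalar inequalities on $(0,1)$, which you verify correctly via the second-derivative test. The paper itself states this lemma without proof, treating it as a classical concentration result; your write-up simply supplies the well-known derivation, so there is nothing to compare.
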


\section{Lemmas for Theorem \ref{lower_bound_thm}}
In this section we collect the proofs of the lemmas used to derive the minimax testing lower bound.

\subsection{Proof of Lemma \ref{lemma_mutual_information_testing_lb}}\label{proof: lem1}

In view of \eqref{starting_pt_display_proof_mut_info_lem} we have
\begin{align*}
\mathcal{R}_{}(H_{\rho},T) &\geq 1 - \left( \P( T = 0 | V = 0) - \P( T = 0 | V = 1) \right)  \\
&\geq 1 - \left|\P^{T|V=0} - \P^{T|V=1}\right|( T = 0 ) \\
&\geq 1 - \| \P^{T|V=0} - \P^{T|V=1} \|_{TV}.
\end{align*}
By the triangle inequality,
\begin{equation*}
\| \P^{T|V=0} - \P^{T|V=1} \|_{TV} \leq \| \P^{T|V=0} - \P^{T} \|_{TV} + \| \P^{T} - \P^{T|V=1} \|_{TV}.
\end{equation*}
Applying the second Pinsker bound to the two terms on the RHS and using that $2ab \leq a^2 + b^2$,
\begin{align*}
\| \P^{T|V=0} - \P^{T|V=1} \|_{TV}^2 &\leq  D_{KL}(\P^{T|V=0} \| \P^{T}) + D_{KL}(\P^{T|V=1} \| \P^{T}) \\
&= 2 I_\Pi(V,T),
\end{align*}
which completes the proof of the lemma.

\subsection{Proof of Lemma \ref{lem: tensor} }\label{proof: lem_tensor}
We prove a more general lemma, but before stating it we recall some information theoretic definitions and identities, see \cite{cover:2012,polyanskiy:2014}. 

For discrete random variables $X$ and arbitrary random variable $Y$,  define the \emph{entropy of} $X$ as 
\begin{equation*}
H(X) = - \underset{x}{\overset{}{\sum}} \P(X=x) \log \P(X=x)
\end{equation*}
and the \emph{conditional entropy of $X$ given $Y$}, 
\begin{equation*}
H(X|Y) = E_Y H(X|Y=y)=-E_Y \underset{x}{\overset{}{\sum}} \P(X=x|Y=y) \log \P(X=x|Y=y).
\end{equation*}
We also recall that conditioning reduced entropy $H(X)\geq H(X|Y)$. Following from this conditioning, on an arbitrary random vector $Z$, also reduces conditional entropy
 \begin{equation*}
H(X|Y) =\int H(X|Y=y)dP_Y(y)\geq \int H(X|Y=y,Z)dP_Y(y)=H(X|Y,Z).
\end{equation*}

For random variables $X,Y, Z$ we define the mutual information between $X$ and $Y$ and conditional mutual information between $X$ and $Y$ given $Z$ as
\begin{align*}
I(X;Y)&=D_{KL}(P_{(X,Y)}\| P_{X}\times P_{Y}),\\
I(X;Y|Z=z)&=D_{KL}(P_{(X,Y)|Z=z}\| P_{X|Z=z}\times P_{Y|Z=z}),\\
I(X;Y|Z)&=\int I(X;Y|Z=z)dP_Z(z).
\end{align*}

Next we recall some conditions of the mutual information. First we note that $I(X,Y)=0$ if and only if $X$ is independent from $Y$.
The chain rule for the mutual information between the random vector $Y=(Y^{(1)},...,Y^{(m)})$ and $V$ is
\begin{align}
I(V;Y)=\sum_{j=1}^{m} I(V;Y^{(j)}|Y^{(1)},..., Y^{(j-1)}).\label{eq: chain_rule}
\end{align}
For discrete random variable $X$ and arbitrary random variable $Y$
\begin{align}
I(X;Y)&=E_{(X,Y)}\log \frac{dP_{(XY)}}{dP_X dP_Y}\\
&= E_{(X,Y)}\log  \frac{1}{dP_X}-E_{(X,Y)}\log  \frac{1}{dP_{(X|Y=y)}}\\
&=H(X)-H(X|Y).\label{eq: mutinf}
\end{align}
In addition, by similar arguments, for arbitrary random variable $Z$ we have
\begin{align}
I(X;Y|Z)=H(X|Z)-H(X|Y,Z).\label{eq: cond_mutinf}
\end{align}

\begin{lemma}\label{lemma_dep_mutual_info_tensor}
Let us assume that the discrete random variable $V$ { and the discrete random vector $F$ are such that the pair $(V,F)$ is independent from the random variable $U$ and the discrete random vector $Y=(Y_1,...,Y_m)$ satisfies that $Y_j$ is conditionally independent from $Y_{1:j-1}:=(Y_1,...,Y_{j-1})$ given $U$ and $(V,F)$}, then
\begin{equation*}
I(V;Y) \leq \underset{j=1}{\overset{m}{\sum}}I(V;Y_j|U) {+ \underset{j=1}{\overset{m}{\sum}}I(F;Y_j|U, V)}. 
\end{equation*}
\end{lemma}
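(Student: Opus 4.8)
The plan is to prove Lemma \ref{lemma_dep_mutual_info_tensor} by manipulating the chain rule for mutual information together with the identities \eqref{eq: mutinf}--\eqref{eq: cond_mutinf}, exploiting the conditional independence structure of the $Y_j$ to control the cross terms. First I would apply the chain rule \eqref{eq: chain_rule} to write $I(V;Y) = \sum_{j=1}^m I(V;Y_j\mid Y_{1:j-1})$. Since $(V,F)$ is independent of $U$, adding $U$ into the conditioning of $V$ is harmless at the level of entropies, so I would rewrite each summand using \eqref{eq: mutinf}--\eqref{eq: cond_mutinf} in terms of entropies of $Y_j$: the natural move is to pass to $I(V;Y_j\mid Y_{1:j-1},U)$ and bound the difference $I(V;Y_j\mid Y_{1:j-1}) \le I(V;Y_j\mid Y_{1:j-1},U)$, which holds because $U\ind (V,Y_{1:j})$ would need checking, or alternatively handle the $U$-conditioning directly since $(V,F)\ind U$ forces the relevant independences.

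Next I would expand $I(V;Y_j\mid Y_{1:j-1},U)$ via \eqref{eq: cond_mutinf} as $H(Y_j\mid Y_{1:j-1},U) - H(Y_j\mid Y_{1:j-1},U,V)$ — here I am using that $Y_j$ is discrete. Using that conditioning reduces entropy, $H(Y_j\mid Y_{1:j-1},U) \le H(Y_j\mid U)$, so this is at most $H(Y_j\mid U) - H(Y_j\mid Y_{1:j-1},U,V)$. Now I would insert and subtract $H(Y_j\mid U,V)$: the term $H(Y_j\mid U) - H(Y_j\mid U,V) = I(V;Y_j\mid U)$, which is exactly the first sum we want. The leftover is $H(Y_j\mid U,V) - H(Y_j\mid Y_{1:j-1},U,V)$. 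The key structural input is that, conditionally on $U$ and $(V,F)$, the variable $Y_j$ is independent of $Y_{1:j-1}$; hence $H(Y_j\mid Y_{1:j-1},U,V,F) = H(Y_j\mid U,V,F)$. Using conditioning reduces entropy once more on $H(Y_j\mid Y_{1:j-1},U,V) \ge H(Y_j\mid Y_{1:j-1},U,V,F) = H(Y_j\mid U,V,F)$, the leftover is bounded by $H(Y_j\mid U,V) - H(Y_j\mid U,V,F) = I(F;Y_j\mid U,V)$, which by \eqref{eq: cond_mutinf} is precisely the second sum. Summing over $j$ then yields the claimed inequality.

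The main obstacle is bookkeeping the conditional-independence hypotheses carefully: in particular verifying that $H(Y_j\mid Y_{1:j-1},U,V,F) = H(Y_j\mid U,V,F)$ is legitimate (this is the stated conditional independence of $Y_j$ from $Y_{1:j-1}$ given $(U,V,F)$), and making sure each use of "conditioning reduces entropy" is applied in the correct direction — it is easy to drop a conditioning variable the wrong way. A secondary point is ensuring the $U$-conditioning can be introduced without cost at the first step, which follows from $(V,F)\ind U$; I would spell this out as $I(V;Y_j\mid Y_{1:j-1}) \le I(V;Y_{1:j-1},U,Y_j) - I(V;Y_{1:j-1},U)$ type manipulations, or more cleanly by noting that since $V\ind U$ we have $H(V) = H(V\mid U)$ and then running the chain rule argument with $U$ always present in the conditioning from the outset. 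Once the entropy inequalities are lined up, the telescoping is immediate and Lemma \ref{lem: tensor} follows by specializing $F=\mu$ and noting the Markov structure of \eqref{def: dist:test:MC} gives the required conditional independence of the $T^{(j)}$ given $(U,V,\mu)$.
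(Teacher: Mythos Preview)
Your proposal is correct and follows essentially the paper's route: pass from $I(V;Y)$ to $I(V;Y\mid U)$ using $V\ind U$, then tensorize via the chain rule and bound each summand using that conditioning reduces entropy together with the conditional independence $Y_j \ind Y_{1:j-1}\mid (U,V,F)$; the paper merely swaps the order, first bounding $I(V;Y\mid U)\le I((V,F);Y\mid U)$ and then tensorizing. Your caution about the term-by-term step $I(V;Y_j\mid Y_{1:j-1}) \le I(V;Y_j\mid Y_{1:j-1},U)$ is warranted---it does \emph{not} follow from the stated hypotheses---so the ``cleaner'' version you describe, introducing $U$ globally via $H(V)=H(V\mid U)$ and $H(V\mid Y)\ge H(V\mid Y,U)$ before applying the chain rule, is the correct way and is exactly what the paper does.
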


\begin{proof}
A non-public coin version of the lemma is given for instance in \cite{raginsky:2016}.

First note that in view of \eqref{eq: mutinf} and since conditioning reduces entropy
\begin{align*}
I\big((Y,U);V\big) = H(V)-H(V|Y,U)\geq H(V)-H(V|Y)=I(Y;V).
\end{align*}
Furthermore, by the chain rule \eqref{eq: chain_rule} and the independence of $U$ and $V$,
\begin{align*}
I\big((Y,U);V\big)=I\big(Y;V|U\big) + I(U;V)=I(Y;V|U).
\end{align*}
{Similarly, by the chain rule and nonnegativity of mutual information,
\begin{equation*}
I\big(V;Y|U\big)  = I\big((V,F);Y|U\big) - I\big(F;Y|U,V\big) \leq  I\big((V,F);Y|U\big).
\end{equation*}
By the identity \eqref{eq: cond_mutinf} and the chain rule \eqref{eq: chain_rule},
\begin{align*}
I\big((V,F);Y|U\big) &= H(Y|U) - H(Y|V,F,U) \\
&= \underset{j=1}{\overset{m}{\sum}} H(Y_j | Y_{1:j-1},U) - H(Y_j |V,F, Y_{1:j-1},U).
\end{align*}
Since conditioning reduces entropy we have $H(Y_j | Y_{1:j-1},U) \leq H(Y_j|U)$. Furthermore, by the conditional independence of  $Y_{1:j-1}$ and $Y_j$ given $(U,V,F)$ results in $H(Y_j |V,F, Y_{1:j-1}, U) = H(Y_j |V,F, U)$. Using these two facts, we obtain that
\begin{align*}
I\big((V,F);Y|U\big) &\leq \underset{j=1}{\overset{m}{\sum}} H(Y_j |,U)  - H(Y_j |V,F,U) \\
&= \underset{j=1}{\overset{m}{\sum}} I\big((V,F);Y_j|U\big).
\end{align*}
Combining the above displays and again applying the chain rule we now obtain that
\begin{equation*}
I(Y;V) \leq \underset{j=1}{\overset{m}{\sum}} I\big((V,F);Y_j|U\big) = \underset{j=1}{\overset{m}{\sum}} \left[I\big(V;Y_j|U\big) + I\big(F;Y_j|U,V\big)\right].
\end{equation*}}
\end{proof}

\subsection{Proof of Lemma \ref{SDPI} }\label{proof: PC-SDPI}
Using the sub-Gaussianity of the likelihood $P_{\Pi}=\int P_{\mu}d\Pi(\mu)$ verified in Lemma \ref{subgauss_lemma}, we adapt the proof of Theorem 3.7 in \cite{raginsky:2016} to the present continuous setting with public coin protocol.

\begin{proof}

We start by noting that if $48 \beta \geq 1$, the result follows immediately from the regular data processing inequality for mutual information. 

In view of the definition of the conditional mutual information and noting that $\P^{V|U=u}(v) = \P^V(v) = 1/2$ by independence of $U$ and $V$,
\begin{align}
I(V,{T}^{(j)}|U) =\frac{1}{2} \int   \sum_{v\in\{0,1\}}D_{KL}\left(\P^{{T}^{(j)}|(V,U)=(v,u)} ;  \P^{{T}^{(j)}|U=u} \right)  d\P^U(u). \label{write_out_def}
\end{align}
By Lemma \ref{chi_sq_div_bounds_KL} below,
\begin{equation}\label{entropy_variance_bound}
 D_{KL}\left(\P^{{T}^{(j)}|(V,U)=(v,u)} \|  \P^{{T}^{(j)}|U=u} \right) \leq \underset{{t} \in {{\{ 0,1 \}}}}{\overset{}{\sum}} \P^{{T}^{(j)}|U=u}({t}) \left( \frac{\P^{{T}^{(j)}|(V,U)=(v,u)}({t})}{\P^{{T}^{(j)}|U=u}({t})} - 1 \right)^2
\end{equation}
$\P^U$-almost surely. Furthermore, by Bayes rule,
\begin{equation}\label{radon_Xj_cond_Yj}
 \frac{\P^{{T}^{(j)}|(U,X^{(j)})=(u,x)}({t})}{\P^{{T}^{(j)}|U=u}({t})}=\frac{d\P^{X^{(j)}|(U,{T}^{(j)})=(u,{t})}}{d\P^{X^{(j)}|U=u}}(x)=:g_{{t},u}(x) ,
\end{equation}
where the equality holds in an $L_1\left(\P^{X^{(j)}}\right)$ sense $\P^U$-almost surely.

For $v\in\{0,1\}$, define
\begin{equation*}
\mathscr{L}_v (X^{(j)}) :=  \frac{d\P^{X^{(j)}|V=v} }{d\P^{X^{(j)}}}(X^{(j)}).
\end{equation*}
Since $V \to (X^{(j)},U) \to {T}^{(j)}$ forms a Markov chain, we can write
\begin{align*}
\frac{\P^{{T}^{(j)}|(V,U)=(v,u)}({t})}{\P^{Y^{(j)}|U=u}({t})}  &=  \int  \frac{\P^{{T}^{(j)}|(U,X^{(j)})=(u,x)}({t})}{\P^{{T}^{(j)}|U=u}({t})}   d\P^{X^{(j)}|V=v}(x) \\
&=  \E_{X^{(j)}} \left[ g_{{t},u}(X^{(j)}) \mathscr{L}_v (X^{(j)}) \right].
\end{align*}
Then in view of $\E_{X^{(j)}} \mathscr{L}_v (X^{(j)}) =1= \E_{X^{(j)}}  g_{y,u}(X^{(j)})  $, $\P^U$-a.s., the right hand side of \eqref{entropy_variance_bound} equals
\begin{equation*}
\underset{{t} \in {\{ 0, 1 \}}}{\overset{}{\sum}}  \P^{{T}^{(j)}|U=u}({t}) \text{Cov}\left( \mathscr{L}_v (X^{(j)}),g_{{t},u}(X^{(j)})\right)^2, \;\;\; \P^U\text{-a.s.}
\end{equation*}
By Theorem 4.13 in \cite{boucheron:etal:2013}, we have that
\begin{equation*}
\E G H \leq \E H \log H +  \log \E e^G
\end{equation*}
for any random variables $G,H$ with $\E H = 1$ and $\E e^G < \infty$. Applying this to $G = s(\mathscr{L}_v (X^{(j)}) -1 )$, $H = g_{{t},u}(X^{(j)})$ we obtain
\begin{align*}
s\text{Cov} \left(\mathscr{L}_v(X^{(j)}), g_{{t},u}(X^{(j)})\right) \leq  \log &\, \E_{X^{(j)}} \left[ e^{s (\mathscr{L}_v(X^{{(j)}}) -1)} \right] \\ &+ \E_{X^{(j)}} \left[ g_{{t},u}(X^{(j)}) \log g_{{t},u}(X^{(j)}) \right].
\end{align*}
By display \eqref{radon_Xj_cond_Yj} and the independence of $X^{(j)}$ and $U$,
\begin{equation*}
\E_{X^{(j)}} \left[ g_{{t},u}(X^{(j)}) \log g_{{t},u}(X^{(j)}) \right] = D_{KL}(\P^{X^{(j)}|(U,{T}^{(j)})=(u,{t})}\|\P^{X^{(j)}|U=u}).
\end{equation*}
Furthermore, in view of Lemma \ref{subgauss_lemma}, $\mathscr{L}_v(X^{(j)})$ is $\sqrt{24\beta}$-sub-Gaussian, hence
\begin{equation*}
 \log \E_{X^{(j)}} \left[ e^{s (\mathscr{L}_v(X^{(j)}) -1)} \right] \leq 24 \beta s^2/2.  
\end{equation*}
Taking $s = (24 \beta)^{-1} \text{Cov} \left(\mathscr{L}_v(X^{(j)}), g_{{t},u}(X^{(j)})\right)$ and combining the above displays we obtain
\begin{align*}
\text{Cov} \left(\mathscr{L}_v(X^{(j)}), g_{{t},u}(X^{(j)})\right)^2 \leq \frac{1}{2} &\text{Cov} \left(\mathscr{L}_v(X^{(j)}), g_{{t},u}(X^{(j)})\right)^2 \\ &+ 24 \beta D_{KL}(\P^{X^{(j)}|(U,{T}^{(j)})=(u,{t})}\|\P^{X^{(j)}|U=u}),
\end{align*}
which holds $\P^U$-almost surely. We now have shown that
\begin{align*}
&\int D_{KL}\left(\P^{{T}^{(j)}|(V,U)=(v,u)} \|  \P^{{T}^{(j)}|U=u} \right)  d\P^U(u)  \\ 
&\qquad\qquad \leq 48 \beta  \int \underset{{t} \in {\{ 0, 1 \}}}{\overset{}{\sum}} \P^{{T}^{(j)}|U=u}({t}) D_{KL}(\P^{X^{(j)}|(U,{T}^{(j)})=(u,{t})}\|\P^{X^{(j)}|U=u}) d\P^U(u) \\
 &\qquad\qquad=48 \beta I(X^{(j)},{T}^{(j)}|U),
\end{align*}
which in view of  \eqref{write_out_def}  concludes the proof.
\end{proof}

\subsection{Sub-Gaussianity lemma}

First we recall some notations from Section \ref{sec: lower}. Let us denote by $\Pi$ the distribution of the random vector $\epsilon R$, where $R=(R_1,\dots,R_d)$ has independent Rademacher marginals and $\epsilon>0$ is small (it is taken to be $\epsilon=\rho/\sqrt{d}$). We take $V\sim Ber(1/2)$ and set $X|(V=0)\sim N(0,\sigma^2 I_d)$ and $X|(V=1)\sim P_{\Pi}$, where $P_{\Pi}=\int P_{\mu}d \Pi(\mu)$ and $P_{\mu}$ is a multivariate Gaussian distribution with mean $\mu$ and $\sigma^2$ times the identity variance. Let $\P^{X}$ and $\P^{X|V}$ denote the corresponding  distributions of $X$ and $X|V$.

\begin{definition}
A random variable $X$ is called $\beta$\emph{-sub-Gaussian} if for all $s \in \R$, 
\begin{equation*}
\E e^{s(X-\E X)} \leq e^{\beta^2 s^2/2}.
\end{equation*}
\end{definition}

The lemma below shows that the likelihood ratios $\frac{d\P^{X|V=0} }{d\P^X}(X)$ and $\frac{d\P^{X|V=1} }{d\P^X}(X)$ are sub-Gaussian.

\begin{lemma}\label{subgauss_lemma}
 The likelihood ratios
 		\begin{equation*}
 	\frac{d\P^{X|V=0} }{d\P^X}(X) \;\; \text{ and } \;\; \frac{d\P^{X|V=1} }{d\P^X}(X)
 	\end{equation*}
 	are $\sqrt{24\beta}$-sub-Gaussian with
 	\begin{equation}\label{def:beta:subgauss}
\beta = \begin{cases}
d \epsilon^4/\sigma^4, \;\;\; \text{ if } \sigma^2/\epsilon^2 < d/2, \\
2 \epsilon^2/\sigma^2, \;\;\; \text{ if }  \sigma^2/\epsilon^2 \geq d/2.
\end{cases}
\end{equation}
\end{lemma}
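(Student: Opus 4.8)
The plan is to compute the two likelihood ratios explicitly and then bound their exponential moments. Write $\sigma^2 = m/n$ and recall $X \mid (V=0) \sim N(0,\sigma^2 I_d)$ while $X \mid (V=1) \sim P_\Pi = \int P_\mu \, d\Pi(\mu)$ with $\Pi$ the law of $\epsilon R$, $R$ a vector of independent Rademachers. Since $\P^X = \tfrac12 \P^{X|V=0} + \tfrac12 \P^{X|V=1}$, it suffices (by a routine comparison argument) to control the ratio $L(X) := d\P^{X|V=1}/d\P^{X|V=0}(X)$: indeed $d\P^{X|V=0}/d\P^X = 2/(1+L)$ and $d\P^{X|V=1}/d\P^X = 2L/(1+L)$ are both bounded functions of $L$, and a sub-Gaussianity transfer lemma (of the type used in \cite{raginsky:2016}) lets one pass from moment bounds on $L$ under $\P^{X|V=0}$ to the stated sub-Gaussianity of the two displayed ratios. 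The first step, then, is to write out
\[
L(x) = \int \exp\!\Big( \tfrac{\langle \mu, x\rangle}{\sigma^2} - \tfrac{\|\mu\|^2}{2\sigma^2}\Big) d\Pi(\mu)
= e^{-d\epsilon^2/(2\sigma^2)} \prod_{i=1}^d \cosh\!\Big(\tfrac{\epsilon x_i}{\sigma^2}\Big),
\]
using that the coordinates of $\mu = \epsilon R$ are independent sign flips, so the integral factorizes.

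Next I would estimate $\E_{X|V=0}\, e^{s(L(X)-1)}$, or equivalently first the moments $\E_{X|V=0} L(X)^k$. Because $X \mid (V=0)$ has independent $N(0,\sigma^2)$ coordinates, $L(X)^k$ also factorizes over coordinates, and one is left with a one-dimensional computation: for $Z\sim N(0,1)$, bound $\E \big[e^{-k\epsilon^2/(2\sigma^2)}\cosh^k(\epsilon Z/\sigma)\big]$. Expanding $\cosh^k$ and using the Gaussian moment generating function, this reduces to a sum of terms of the form $e^{(j^2 - k) \epsilon^2/(2\sigma^2)}$ over $0\le j\le k$ with binomial weights; the dominant contribution is governed by the ratio $t := \epsilon^2/\sigma^2 = n\rho^2/(dm)$ — precisely the quantity whose two regimes ($t$ small, i.e. $\sigma^2/\epsilon^2 \ge d/2$, versus $t$ not small) produce the two cases in \eqref{def:beta:subgauss}. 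Carrying this through coordinatewise and multiplying, one gets $\E_{X|V=0} L(X)^k \le e^{c k^2 \beta}$ (equivalently a sub-Gaussian-type tail for $L-1$) with $\beta$ as defined. The case split arises naturally: when $\epsilon^2/\sigma^2$ is large relative to $1/d$ one must use the crude bound $\beta = 2\epsilon^2/\sigma^2$ coming from a single coordinate, whereas when it is small the $d$ nearly-independent small contributions aggregate to $\beta = d\epsilon^4/\sigma^4$.

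The main obstacle, I expect, is getting the constant $24$ (and the clean two-regime formula for $\beta$) out of the coordinatewise moment bounds without slack — in particular, controlling the tail of the binomial sum $\sum_j \binom{k}{j} e^{(j^2-k)t/2}$ uniformly in $k$, and then arguing that the product over $d$ coordinates does not blow the exponent up beyond $O(k^2\beta)$. A secondary technical point is the transfer step: turning a bound on $\E_{X|V=0}(L-1)^2$ (or the MGF of $L-1$) into sub-Gaussianity of $2/(1+L)-1$ and $2L/(1+L)-1$; here one uses that $x\mapsto 2/(1+x)$ and $x\mapsto 2x/(1+x)$ are Lipschitz on $[0,\infty)$ together with a centering argument, which is where an extra numerical factor (absorbed into the $24$) enters. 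Once these are in place, reading off $\beta$ in the two regimes \eqref{def:beta:subgauss} is immediate from the definition $\epsilon = \rho/\sqrt d$, $\sigma^2 = m/n$, matching the $\beta$ of Lemma \ref{SDPI} since $n^2\rho^4/(dm^2) = d\epsilon^4/\sigma^4$ and $2n\rho^2/(dm) = 2\epsilon^2/\sigma^2$.
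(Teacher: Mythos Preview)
Your setup is right --- the factorization $L(x)=e^{-d\epsilon^2/(2\sigma^2)}\prod_i\cosh(\epsilon x_i/\sigma^2)$ and the identities $\mathscr L_0=2/(1+L)$, $\mathscr L_1=2L/(1+L)$ are exactly where the paper starts --- but the central step of your plan cannot work: $L-1$ is \emph{not} sub-Gaussian under $\P^{X|V=0}$, so there is nothing to feed into a Lipschitz transfer. Already for $d=1$, $L=e^{-t/2}\cosh(\sqrt t\,Z)$ (with $t=\epsilon^2/\sigma^2$, $Z\sim N(0,1)$) behaves like $\tfrac12 e^{\sqrt t\,|Z|}$ for large $|Z|$, so $L$ has a log-normal--type upper tail and $\E_{X|V=0}\, e^{s(L-1)}=\infty$ for every $s>0$. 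Equivalently, your target moment bound $\E_{X|V=0} L^k\le e^{ck^2\beta}$ with a universal $c$ fails: from $\cosh(y)\ge e^{|y|}/2$ one gets $\E_{X|V=0} L^k\gtrsim 2^{-kd}e^{d(k^2-k)t/2}$, whose exponent $\sim dk^2t/2$ exceeds $ck^2\beta$ for large $k$ unless $c\gtrsim d$ (in either regime). The obstacle you flag --- ``arguing that the product over $d$ coordinates does not blow the exponent up beyond $O(k^2\beta)$'' --- is therefore not a matter of bookkeeping but a genuine obstruction. (Incidentally, you also have the two cases swapped: $\beta=d\epsilon^4/\sigma^4$ is the regime $\epsilon^2/\sigma^2>2/d$, not the small-$t$ regime.)

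The paper avoids this by passing to the log scale. The elementary inclusion
\[
\{\,|\mathscr L_v-1|\ge s\,\}\ \subset\ \{\,|\log L|\ge s\,\}\qquad(0<s<1),
\]
which follows from $|\mathscr L_0-1|=|L-1|/(L+1)\ge s \Leftrightarrow L\notin\bigl(\tfrac{1-s}{1+s},\tfrac{1+s}{1-s}\bigr)$, replaces the heavy-tailed $L$ by $\log L=\sum_i\bigl[-t/2+\log\cosh(\sqrt t\,Z_i)\bigr]$, a sum of i.i.d.\ terms growing only linearly in $|Z_i|$, which \emph{does} admit sub-Gaussian tails. The paper then bounds $\P^X(|\log L|\ge s)$ by a Chernoff argument with a \emph{real} exponent $\nu$ (not integer moments), using the estimate $\E\cosh^\nu(\sqrt t\,Z)\le\exp\bigl(\nu t/2+O(\nu^2t^2)\bigr)$ valid for $|\nu|t<1/2$ (Lemma~\ref{lemma_coshZ_expectation}); optimizing $\nu\asymp s/\beta$ yields $\P^X(|\mathscr L_v-1|\ge s)\le 12\,e^{-s^2/(2\beta)}$, and a tail-to-MGF conversion (Lemma~\ref{lem: prob_subg_implies_lapl_subg}) delivers the $\sqrt{24\beta}$-sub-Gaussianity. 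No Lipschitz transfer is needed.
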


\begin{proof}
Using the notation
\begin{equation*}
\mathscr{L}_v (X) :=  \frac{d\P^{X|V=v} }{d\P^{X}}(X), \quad  v\in\{0,1\},
\end{equation*}
we show below that for all $t \in \R$,
\begin{equation*}
\E_{X} e^{t(\mathscr{L}_v(X) - \E_X \mathscr{L}_v(X))} \leq e^{ 24 \beta t^2/2 }.
\end{equation*}
This is implied by
\begin{equation}\label{to_show_subgaussianity}
\P^X \left( |\mathscr{L}_v - \E_{X} \mathscr{L}_v | \geq s \right) \leq  12 \exp \left( -\frac{s^2}{2\beta} \right) \,\,\, \text{ for all } s>0,
\end{equation}
where the equivalence is well known, but a proof can be found in Lemma \ref{lem: prob_subg_implies_lapl_subg}. Since $|\mathscr{L}_v(X) - \E_{X} \mathscr{L}_v(X) | = |\mathscr{L}_v(X) - 1 | \leq 1$, it is enough to consider $0<s<1$.

To prove \eqref{to_show_subgaussianity}, let us first introduce the notation $L := \frac{dP_\pi}{dP_0}$, and note that
\begin{equation*}
\mathscr{L}_0 = \frac{2}{1 + L}\quad\text{and}\quad \mathscr{L}_1 = \frac{2}{1 + L^{-1}}.
\end{equation*}
 Then for $x \in \{ \mathscr{L}_0 - 1  \geq s \}$ we have
\begin{equation*}
\frac{2}{1+L}(x) = \mathscr{L}_0(x) \geq s+1 \quad \text{ and }\quad 0 \leq  \frac{2L}{1+L}(x) = 1 - \frac{1 - L}{1 + L}(x)  \leq 1 - s,
\end{equation*}
where the last inequality follows from $\mathscr{L}_0 - 1 = \frac{1 - L}{1 + L}$.
Consequently, $L^{-1}(x) \geq \frac{s+1}{1-s}$. Similarly, for $x \in \{ \mathscr{L}_0 - 1  \leq -s \}$,
\begin{equation*}
0 \leq \frac{2}{1+L}(x)  \leq 1-s  \text{ and } \frac{2L}{1+L}(x)  \geq 1 + s
\end{equation*}
and thus $L^{}(x) \geq \frac{s+1}{1-s}$. Combining the above bounds results in for $x \in \{ |\mathscr{L}_0 - 1| \geq s \}$ that
\begin{equation*}
|\log L(x)| \geq  \log \left(\frac{1+s}{1-s}\right) \geq \frac{2s}{1+s} \geq s,
\end{equation*}
where the last two inequalities follow from $ \log x \geq 1 - \frac{1}{x} $ and $0 < s < 1$.

Through the same computation, the above display is also true for $x \in \{ |\mathscr{L}_1 - 1| \geq s \}$. Consequently, for  $v=0,1$,
\begin{align*}
\P^X \left( |\mathscr{L}_v - \E \mathscr{L}_v | \geq s \right)& \leq  \P^X \left( |\log L| \geq s \right)\\
&=\frac{1}{2} P_0 \left( | \log (L)| \geq s \right) +  \frac{1}{2}P_\pi \left( | \log (L)| \geq   s \right).
\end{align*}

Using Markov's inequality the terms on the right hand side can be further bounded as
\begin{align}
P_0 \left( |\nu \log (L)| \geq \nu s \right) &\leq e^{-\nu s} (\E_{X|V=0} L^\nu + \E_{X|V=0} L^{- \nu}) \;\;\;\;\;\;\;\;\;\;\, \text{ for } \nu>0, \label{ineq_1} \\
P_\pi \left( |\lambda \log (L)| \geq  \lambda s \right) &\leq e^{-\lambda_1 s} \E_{X|V=1} L^{\lambda_1} + e^{-\lambda_2 s} \E_{X|V=1} L^{- \lambda_2} \; \text{ for } \lambda_1,\lambda_2 >0. \label{ineq_2}
\end{align}
Note that $ \E_{X|V=1} L^{\lambda} = \E_{X|V=0} L^{\lambda+1}$, hence choosing $\lambda_1 = \nu  - 1$ and $\lambda_2 = \nu + 1$ in display \eqref{ineq_2}, we get by combining the above two displays that for $v\in\{0,1\}$,
\begin{equation}\label{eq: UB:help}
\P^X \left( |\mathscr{L}_v- \E \mathscr{L}_v | \geq s \right) \leq \frac{1}{2} \left[  1 + e^{-s} + e^{s} \right] e^{-\nu s} (\E_{X|V=0} L^\nu + \E_{X|V=0} L^{- \nu}).
\end{equation}

We proceed by bounding the expectations in the above display after which minimizing in $\nu$ gives us the result of the lemma. Recall that $X|(V=0) \sim \mathcal{N}(0, \sigma^2I_d)$ and $X_i|(V=1) \stackrel{iid}{\sim} \frac{1}{2}\mathcal{N}(\epsilon, \sigma^2) + \frac{1}{2}\mathcal{N}(-\epsilon , \sigma^2)$, $i=1,...,d$.
Consequently,
\begin{align}
L(X) &= \underset{i=1}{\overset{d}{\Pi}} \left[ \frac{\exp\big(-\frac{1}{2\sigma^2}(X_{i} - \epsilon)^2\big) + \exp\big(-\frac{1}{2\sigma^2}(X_{i} + \epsilon)^2\big) }{2\exp(-\frac{1}{2\sigma^2}X_{i}^2)} \right] \nonumber \\
&= \underset{i=1}{\overset{d}{\Pi}} \exp(-\frac{1}{2} \epsilon^2/\sigma^2) \cosh (X_{i} \epsilon / \sigma^2). \label{L}
\end{align} 
Then by independence of $X_j$, $j=1,...,d$
\begin{align*}
\E_{X|V=0} L^\nu &= \left(e^{-\frac{\nu}{2} \epsilon^2/\sigma^2} \E  \cosh^\nu\left(\frac{\epsilon}{\sigma} Z\right) \right)^d,
\end{align*}
where $Z\sim N(0,1)$.

Next we distinguish two cases. Suppose first that $2/d \leq \epsilon^2/\sigma^2$. Let us take $\nu = s \sigma^4/(d \epsilon^4)$. Then $\nu \epsilon^2/\sigma^2 < 1/2$, as $ 0<s<1$, and hence in view of Lemma \ref{lemma_coshZ_expectation}
\begin{align*}
e^{-s\nu}(\E_{0} L^\nu + \E_{0} L^{-\nu})
&\leq \exp \left(\nu^2 \frac{d\epsilon^4}{2\sigma^4}-s\nu\right)\left( 1  + e^{(3/2)\nu d \epsilon^4/\sigma^4} \right)\\
 &\leq   \exp \left(-\frac{s^2}{2}\frac{\sigma^4}{d\epsilon^4} \right)\left( 1+e^{(3/2)s} \right).
\end{align*}


The remaining case is when $2/d >\epsilon^2/\sigma^2$. Choosing  $\nu = s\sigma^2/(2\epsilon^2)$ results in $\nu \epsilon^2/\sigma^2 < 1/2$, hence again in view of Lemma \ref{lemma_coshZ_expectation}
\begin{align*}
e^{-s\nu}(\E_{0} L^\nu + \E_{0} L^{-\nu}) &\leq  \left(\nu^2 \frac{d\epsilon^4}{2\sigma^4}-s\nu\right)\left( 1  + e^{(3/2)\nu d \epsilon^4/\sigma^4} \right) \\
 &\leq  \exp \left(-\frac{s^2}{2} \frac{\sigma^2}{2\epsilon^2} \right)\left(1+e^{(3/2)s}\right).
\end{align*}
Hence, by plugging in the last two displays into \eqref{eq: UB:help}, and noting that for $0 < s < 1$
\begin{equation*}
 \frac{1}{2}\left(1+e^{(3/2)s}\right)\left[  1 + e^{-s} + e^{s} \right] \leq 12,
\end{equation*}
we arrive at \eqref{to_show_subgaussianity}, for $\beta$ given in \eqref{def:beta:subgauss}, concluding the proof of the lemma.
\end{proof}

\section{Additional technical lemmas}\label{sec: technical:lemmas}
In this subsection, we collect technical lemmas and their proofs.

{

The following lemma is essentially Lemma 11 in \cite{cai_distributed_2020} adopted the setting in this article.
\begin{lemma}[Multivariate Gaussian estimation SPDI]\label{lem : multivar gaussian estimation SPDI}
Let $R$ a $d$-dimensional vector of independent Rademacher variables, $V$ be an independent $\text{Ber}(1/2)$ random variable and let $\Pi$ denote the distribution of $\mu = \epsilon V R$ where $\epsilon > 0 $ is a constant. Suppose that the random vector $X=(X_1,\dots, X_d)$ satisfies $X|\mu \sim N( \mu, \sigma^2 I_d)$ and that $T$ is a discrete random variable such that $V \to \mu \to X \to Y$ forms a Markov chain. Then,
\begin{equation*}
I(\mu; T|V) \leq 128 \left(\frac{\epsilon}{\sigma} \right)^2 I(X;T|V=1).
\end{equation*}
\end{lemma}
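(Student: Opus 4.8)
The plan is to follow the strategy of Lemma 11 in \cite{cai_distributed_2020}, adapting it to the present Rademacher-mixture prior. First I would use the chain rule and the fact that $V\to\mu\to X\to T$ is Markov to reduce the conditional mutual information $I(\mu;T\mid V)$ to something controlled by a single-coordinate quantity. Concretely, conditioning on $V=0$ makes $\mu$ degenerate (equal to $0$), so $I(\mu;T\mid V=0)=0$ and hence $I(\mu;T\mid V)=\tfrac12 I(\mu;T\mid V=1)$. Under $V=1$ we have $\mu=\epsilon R$ with $R$ a vector of i.i.d.\ Rademacher variables, so $I(\mu;T\mid V=1)=I(R;T\mid V=1)$, and by the chain rule (sub-additivity for independent sources, \eqref{eq: chain_rule} together with the independence of the $R_i$) this is at most $\sum_{i=1}^d I(R_i;T\mid V=1, R_{1:i-1})$, which one bounds by $\sum_{i=1}^d I(R_i;T\mid V=1)$ after checking that the extra conditioning on $R_{1:i-1}$ can only be absorbed at the cost of a constant — this is exactly the kind of single-letterization performed in the distributed-estimation literature.

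The core of the argument is then a \emph{per-coordinate} strong data processing inequality: for each $i$, bound $I(R_i;T\mid V=1)$ by a multiple of $I(X;T\mid V=1)$ with the correct constant $(\epsilon/\sigma)^2$. Here I would condition on $V=1$ throughout and on the remaining coordinates as needed, and compare the two conditional laws of $X$ given $R_i=\pm1$: these are product Gaussians differing only in the $i$-th mean by $2\epsilon$, so the relevant likelihood ratio is $\exp(\cdot)$ of a sub-Gaussian quantity with variance proxy of order $\epsilon^2/\sigma^2$ — the same computation underlying Lemma \ref{subgauss_lemma}, which already isolates the $\cosh(X_i\epsilon/\sigma^2)$ factor in \eqref{L}. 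Using the variational/covariance argument of the proof of Lemma \ref{SDPI} (the Theorem 4.13 bound from \cite{boucheron:etal:2013} applied to $G=s(\mathscr L-1)$, $H=g_{t}$, then optimizing in $s$), each $I(R_i;T\mid V=1)$ is controlled by the sub-Gaussianity constant of the relevant one-dimensional likelihood ratio times $\sum_t \P(T=t)D_{KL}(\P^{X\mid T=t}\|\P^X)$, i.e.\ by a constant multiple of $(\epsilon/\sigma)^2$ times the ``information content'' of $T$ about the $i$-th coordinate of $X$.

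Finally I would reassemble the pieces: summing the per-coordinate bounds and using that the contributions $\sum_i D_{KL}(\P^{X_i\mid\cdots}\|\P^{X_i\mid\cdots})$ recombine (by the chain rule for KL on the product structure of $X$) into a single $I(X;T\mid V=1)$, yielding $I(\mu;T\mid V)\le 128(\epsilon/\sigma)^2 I(X;T\mid V=1)$ once all the absolute constants $2,12,24,\dots$ are multiplied out. The main obstacle I anticipate is bookkeeping the constant: one must track the constant from the sub-Gaussianity of the one-dimensional likelihood ratio, the factor lost in passing from $I(R_i;T\mid V=1,R_{1:i-1})$ to $I(R_i;T\mid V=1)$, and the factor $\tfrac12$ from conditioning on $V$, and verify that their product is no larger than $128$; the information-theoretic mechanics themselves are a direct transcription of the proof of Lemma \ref{SDPI} restricted to a single coordinate.
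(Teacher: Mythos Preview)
Your outline is essentially the paper's proof: eliminate the $V=0$ case, apply the chain rule over the coordinates of $R$, invoke a per-coordinate SDPI for the scalar Gaussian channel $R_k\to X_k\to T$, and recombine into $I(X;T\mid V{=}1)$ via the chain rule and the independence of the $X_k$ under $V=1$. The only substantive difference is that for the per-coordinate step the paper does \emph{not} rerun the sub-Gaussianity/variational argument of Lemma~\ref{SDPI}; it simply quotes Lemma~14 of \cite{cai_distributed_2020}, which already delivers $I(\epsilon R_k;T\mid V{=}1)\le 64\,(2\epsilon/\sigma)^2\, I(X_k;T\mid V{=}1)$ for the one-dimensional channel, so the constant $128$ falls out immediately after the factor $\tfrac12$ from $\P(V{=}1)$ and there is no bookkeeping to do. Your route through the machinery of Lemmas~\ref{SDPI}--\ref{subgauss_lemma} would also work but is unnecessary here.

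One small correction: dropping the conditioning on $R_{1:i-1}$ is not done ``at the cost of a constant.'' In the paper this is treated as an \emph{equality} coming from the independence of the $R_i$, and the only inequality needed at the recombination stage is $\sum_k I(X_k;T\mid V{=}1)\le I(X;T\mid V{=}1)$, which holds because the $X_k$ are independent given $V=1$; no extra constant enters.
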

\begin{proof}
Write $R_1,\dots,R_d$ for the coordinates of $R$ and write for $k \leq d$, $R_{1:k} := (R_1,\dots,R_k)$ and $X_{1:k} = (X_1,\dots,X_k)$. Conditionally on $V=0$, $\mu = 0$ with probability $1$, so $I(\mu; T | V = 0) = 0$. Conditionally on $V=1$, $\mu = \epsilon R$. Combining these facts with the chain rule for mutual information,
\begin{align*}
I(\mu; T |V) = \frac{1}{2} I(\epsilon R ; T | V=1) &= \frac{1}{2}\underset{k=1}{\overset{d}{\sum}} I( \epsilon R_k ; T | V=1, R_{1:k-1})\\ &= \frac{1}{2}\underset{k=1}{\overset{d}{\sum}} I( \epsilon R_k ; T | V=1),  
\end{align*}
where the last equality follows from the fact that the coordinates of $R$ are independent, so $R_{k+1}$ is independent of $R_{1:k}$. Furthermore, $R_k|V=1 \to X_k |V=1 \to T|V=1$ forms a Markov chain with $(X_k|R_k,V=1) \sim  N(\epsilon R_k,\sigma^2)$. Consequently, by applying Lemma 14 in \cite{cai_distributed_2020},
\begin{equation*}
I( \epsilon R_k ; T | V=1) \leq 64 \left( \frac{2\epsilon}{\sigma} \right)^2 I( X_k ; T | V=1).
\end{equation*}
The proof is now finished by observing that $X_{k+1}|V=1$ is independent of $X_{1:k}|V=1$, so combining the above inequality with the chain rule of mutual information, we obtain
\begin{equation*}
I(\mu; T |V) \leq 128 \left( \frac{\epsilon}{\sigma} \right)^2 \underset{k=1}{\overset{d}{\sum}} I( X_k ; T | V=1, X_{1:k-1})  = 128 \left(\frac{\epsilon}{\sigma} \right)^2 I(X;T|V=1).
\end{equation*}
\end{proof}}

\begin{lemma}\label{lemma_coshZ_expectation}
Let $Z\sim N(0,1)$ and let $\nu \in \R$ such that $|\nu| \epsilon^2/\sigma^2< 1/2$. It holds that
\begin{equation}\label{cosh_exp_bound}
\E  \cosh^\nu\left(\frac{\epsilon}{\sigma} Z\right) \leq \exp \left({\nu \frac{\epsilon^2}{2\sigma^2}+\nu^2 \frac{3\epsilon^4}{2\sigma^4}} - \mathbbm{1}_{\{\nu < 0\}} \frac{3}{2}\nu \frac{\epsilon^4}{\sigma^4} \right).
\end{equation}
\end{lemma}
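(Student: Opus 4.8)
I would write $a=\epsilon/\sigma$ and $Y:=\log\cosh(aZ)$, so that $\cosh^\nu(aZ)=e^{\nu Y}$ and the task is to bound $\E e^{\nu Y}$. First I would record two elementary facts, each proved by differentiation: for every real $x$,
\[
\tfrac{x^2}{2}-\tfrac{x^4}{12}\;\le\;\log\cosh x\;\le\;\tfrac{x^2}{2}.
\]
The right inequality holds because $x\mapsto \tfrac{x^2}{2}-\log\cosh x$ vanishes at $0$ and has derivative $x-\tanh x$, which has the sign of $x$; the left one because $x\mapsto \log\cosh x-\tfrac{x^2}{2}+\tfrac{x^4}{12}$ vanishes together with its derivative at $0$ and has second derivative $x^2-\tanh^2 x\ge0$. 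Combined with the Gaussian identities $\E Z^2=1$, $\E Z^4=3$ and $\E e^{tZ^2}=(1-2t)^{-1/2}$ for $t<\tfrac12$, these are the only ingredients. I would then split according to the sign of $\nu$ (the indicator in the statement vanishes exactly when $\nu\ge0$).

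For $\nu\ge0$ the upper bound on $\log\cosh$ gives $\cosh^\nu(aZ)\le e^{\nu a^2 Z^2/2}$, and since $\nu a^2\le|\nu|\epsilon^2/\sigma^2<\tfrac12$ we may take expectations: $\E\cosh^\nu(aZ)\le(1-\nu a^2)^{-1/2}$. It then remains to verify the deterministic estimate $-\tfrac12\log(1-x)\le\tfrac{x}{2}+\tfrac{3x^2}{2}$ for $x=\nu a^2\in[0,\tfrac12)$, which follows from $-\log(1-x)\le x+\tfrac{x^2}{1-x}\le x+2x^2$ on $[0,\tfrac12]$. This yields exactly the claimed bound $\exp\bigl(\tfrac{\nu\epsilon^2}{2\sigma^2}+\tfrac{3\nu^2\epsilon^4}{2\sigma^4}\bigr)$.

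For $\nu<0$ the bound is generous (the term $-\tfrac32\nu\epsilon^4/\sigma^4$ is positive), so a crude linearization works, and — crucially — one should not try to exponentiate the quartic lower bound on $\log\cosh$, since the Gaussian has no finite $e^{cx^4}$-moment. Instead, note $\nu Y\le0$ and use $e^{y}\le1+y+\tfrac{y^2}{2}$ for all $y\le0$ (the difference $1+y+\tfrac{y^2}{2}-e^y$ vanishes at $0$ and has nonpositive derivative $1+y-e^y$). Hence $\cosh^\nu(aZ)\le 1+\nu Y+\tfrac{\nu^2}{2}Y^2$; taking expectations and using $\E Y\ge\tfrac{a^2}{2}-\tfrac{a^4}{4}$ (from the quartic lower bound, $\E Z^2=1$, $\E Z^4=3$), $\E Y^2\le\tfrac{a^4}{4}\E Z^4=\tfrac{3a^4}{4}$ (from the upper bound), and $\nu<0$, gives
\[
\E\cosh^\nu(aZ)\;\le\;1+\nu\Bigl(\tfrac{a^2}{2}-\tfrac{a^4}{4}\Bigr)+\tfrac{3\nu^2 a^4}{8}.
\]
On the other hand $e^{t}\ge1+t$ applied to the exponent on the right-hand side of the lemma shows that right-hand side is at least $1+\tfrac{\nu a^2}{2}+\tfrac{3\nu^2a^4}{2}-\tfrac32\nu a^4$. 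Subtracting the two, the claim for $\nu<0$ reduces to $\tfrac{a^4}{8}\,\nu\,(10-9\nu)\le0$, which is immediate since $\nu<0$.

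The only place a naive approach stalls is this negative-$\nu$ regime: replacing the outer exponential by its second-order Taylor polynomial at $0$ (legitimate because its argument is nonpositive) converts the problem into one about the first two moments of $\log\cosh(aZ)$, which the two-sided bound controls — and the constant $3$ appearing in the statement is exactly $\E Z^4$. Everything else is one-line calculus.
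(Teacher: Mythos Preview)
Your proof is correct. For $\nu\ge0$ it coincides with the paper's argument: both use $\cosh x\le e^{x^2/2}$ and then control the chi-square moment generating function (you compute $(1-\nu a^2)^{-1/2}$ and bound $-\tfrac12\log(1-x)$ directly; the paper invokes its auxiliary Lemma~\ref{subgaussianity_lemma}, which is the same estimate in disguise).

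For $\nu<0$ the two proofs diverge. The paper Taylor-expands the map $x\mapsto\cosh^\nu(\tfrac{\epsilon}{\sigma}x)$ to second order around $0$, writes the remainder via an intermediate point $r_Z$, and bounds the second derivative using $\tanh^2 x\le x^2$ and $\cosh^\nu\le1$; this yields directly
\[
\E\cosh^\nu(aZ)\;\le\;1+\nu\frac{a^2}{2}\Bigl[(\nu-1)\,3a^2+1\Bigr],
\]
which is then compared to the target exponential. You instead work in the variable $Y=\log\cosh(aZ)$: the two-sided bound $\tfrac{x^2}{2}-\tfrac{x^4}{12}\le\log\cosh x\le\tfrac{x^2}{2}$ controls the first two moments of $Y$, and the key extra step is the inequality $e^y\le1+y+\tfrac{y^2}{2}$ on $y\le0$ (applied at $y=\nu Y$), which reduces everything to those two moments. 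Your route avoids computing derivatives of $\cosh^\nu$ and the Lagrange remainder, at the cost of having to verify the slightly less standard inequality $\nu(10-9\nu)\le0$ at the end. Both approaches ultimately rely on $\E Z^4=3$, which is where the constant $3$ in the statement comes from, as you note.
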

\begin{proof}
First assume that $\nu \geq 0$. Using $\cosh(x) \leq e^{x^2/2}$ we find
\begin{equation*}
\E  \cosh^\nu\left(\frac{\epsilon}{\sigma} Z\right) \leq \E e^{\nu \frac{\epsilon^2}{2\sigma^2} Z^2}.
\end{equation*}
In view of Lemma \ref{subgaussianity_lemma},
\begin{equation*}
\E e^{\lambda (Z^2-1)} \leq e^{2\lambda^2} \text{ for all } 0 \leq \lambda \leq 1/4.
\end{equation*}
Applying this to the second last display yields \eqref{cosh_exp_bound}. 

Consider now the case that $\nu < 0$. We have 
\begin{align*}
\frac{d}{dx} \cosh^\nu\left(\frac{\epsilon}{\sigma} x\right) &= \nu \frac{\epsilon}{\sigma} \cosh^\nu\left(\frac{\epsilon}{\sigma} x\right) \tanh \left(\frac{\epsilon}{\sigma}x\right),  \\
\frac{d^2}{dx^2} \cosh^\nu\left(\frac{\epsilon}{\sigma} x\right) &= \nu \frac{\epsilon^2}{\sigma^2} \cosh^\nu\left(\frac{\epsilon}{\sigma} x\right) \left[ ( \nu -1 ) \tanh^2 \left(\frac{\epsilon}{\sigma}x\right)  + 1 \right]=:\tau(x)
\end{align*}
Since $\cosh(0)=1$ and $\tanh (0)=0$, a second order Taylor expansion of $x \mapsto \cosh^\nu\left(\frac{\epsilon}{\sigma} x\right)$ about $0$ yields
\begin{align*}
\E  \cosh^\nu\left(\frac{\epsilon}{\sigma} Z\right) &= \E \left[1  + \frac{Z^2}{2!} \tau(r_Z Z)\right], \text{ for some}\quad   r_Z \in [0,1].
\end{align*}
Since $\tanh^2(x) \leq x^2$ and $\cosh(x) \geq 1$ for all $x \in \R$,
\begin{align*}
\E \frac{Z^2}{2!} \tau(r_Z Z) \leq \nu \frac{\epsilon^2}{2\sigma^2} \bigg[  (\nu-1) \frac{\epsilon^2}{\sigma^2}\E  r_Z^2 Z^4    +1\bigg]\leq \nu \frac{\epsilon^2}{2\sigma^2} \bigg[  (\nu-1) \frac{3\epsilon^2}{\sigma^2}  +1\bigg] .
\end{align*}
Then by combining the above two displays
\begin{equation*}
\E  \cosh^\nu\left(\frac{\epsilon}{\sigma} Z\right) \leq \exp \left({\nu \frac{\epsilon^2}{2\sigma^2}+\nu^2 \frac{3\epsilon^4}{2\sigma^4}} - \frac{3}{2}\nu \frac{\epsilon^4}{\sigma^4} \right),
\end{equation*}
\end{proof}

The next lemma gives a well-known sufficient (and necessary) condition for the sub-Gaussian distribution. In the literature we did not find the present, required form of the lemma, hence for completeness we also provide its proof.

\begin{lemma}\label{lem: prob_subg_implies_lapl_subg}
Let $X$ a mean-zero random variable satisfying
\begin{equation*}
\P \left( |X| \geq s \right) \leq C \exp \left( - \frac{s^2}{2 \beta} \right)
\end{equation*}
for some $C\geq 2$, $\beta > 0$ and for all $s \in [0,\infty)$. Then,
\begin{equation*}
\E e^{s X} \leq e^{2\beta C s^2/2}. 
\end{equation*}
\end{lemma}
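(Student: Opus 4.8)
The plan is to establish the standard equivalence between a sub-Gaussian tail bound and a sub-Gaussian moment generating function bound, adapting the usual argument to track the constant $C$ carefully. First I would write $\E e^{sX}$ in terms of the tail probability via the layer-cake (integration by parts) formula. Concretely, for $s > 0$,
\begin{equation*}
\E e^{sX} = 1 + s\int_0^\infty e^{st}\,\P(X \geq t)\,dt - s\int_{-\infty}^0 e^{st}\,\P(X \leq t)\,dt \leq 1 + s\int_0^\infty e^{st}\,\P(|X| \geq t)\,dt + (\text{a similar term}),
\end{equation*}
and then insert the hypothesis $\P(|X|\geq t) \leq C e^{-t^2/(2\beta)}$ to reduce everything to Gaussian-type integrals of the form $\int_0^\infty e^{st - t^2/(2\beta)}\,dt$.

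Next I would evaluate (or bound) these integrals by completing the square: $\int_{\R} e^{st - t^2/(2\beta)}\,dt = \sqrt{2\pi\beta}\,e^{\beta s^2/2}$, so that after collecting terms one obtains a bound of the shape $\E e^{sX} \leq 1 + (\text{const})\cdot C\sqrt{\beta}|s|\,e^{\beta s^2/2}$. The remaining task is purely algebraic: to absorb this into the clean exponential form $e^{2\beta C s^2/2} = e^{\beta C s^2}$. Here I would split into the regime of small $|s|$ and large $|s|$. For small $|s|$ one uses a second-order Taylor/Hermite-type argument — since $\E X = 0$, the first-order term vanishes and $\E e^{sX} = 1 + \tfrac{s^2}{2}\E X^2 + \cdots$, and the sub-Gaussian tail gives $\E X^2 \leq 2\beta C$ (again via the layer-cake formula $\E X^2 = \int_0^\infty 2t\,\P(|X|\geq t)\,dt \leq 2C\int_0^\infty t e^{-t^2/(2\beta)}\,dt = 2C\beta$), which matches the desired exponent. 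For large $|s|$ the quadratic term $\beta C s^2$ in the exponent dominates and easily swallows the polynomial prefactor $C\sqrt\beta|s|$, using that $C \geq 2$.

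The cleanest way to make the algebra airtight, and the step I expect to need the most care, is the interpolation between these two regimes: one wants a single inequality $\E e^{sX} \leq e^{\beta C s^2}$ valid for all $s$, not just asymptotically. A robust route is to bound $\E e^{sX} \leq \E e^{s|X|}$ (using symmetrization or just $e^{sX} \le e^{s|X|} + e^{-s|X|}$ together with the mean-zero normalization), then use $\E e^{s|X|} = 1 + s\int_0^\infty e^{st}\P(|X| \ge t)\,dt$ and the bound $1 + x \le e^x$ combined with the explicit Gaussian integral, checking numerically that the constant $C \geq 2$ gives enough room; alternatively one invokes the known fact (e.g.\ Vershynin, \emph{High-Dimensional Probability}, Prop.\ 2.5.2) that the sub-Gaussian tail with constant $K$ implies the MGF bound with a universal multiple of $K$, and verifies that in our normalization the multiple is at most $2C$. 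Either way, the main obstacle is bookkeeping of constants rather than any conceptual difficulty.
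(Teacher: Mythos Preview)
Your primary route---applying the layer-cake formula directly to $e^{sX}$ and bounding via a Gaussian integral---does not exploit the mean-zero hypothesis, and indeed produces a bound of the form $1 + (\text{const})\,C\sqrt{\beta}\,|s|\,e^{\beta s^2/2}$ with a genuine linear term in $s$. For small $|s|$ this exceeds $e^{C\beta s^2} \approx 1 + C\beta s^2$, so the inequality fails in that regime; you see this and propose patching small $s$ by Taylor expansion. But controlling the full tail of the Taylor series, not just the $s^2$ term, requires bounds on \emph{all} absolute moments $\E|X|^k$, and obtaining those from the tail hypothesis is precisely where the work lies. Your suggestion to quote Vershynin's Prop.~2.5.2 does not help directly either, since that result gives a universal absolute constant rather than the specific factor $2C$ demanded here.

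The paper's proof takes the moment route from the start and avoids regime-splitting altogether. It applies the layer-cake formula to $|X|^k$ (not to $e^{sX}$), giving $\E|X|^k \le \tfrac{C}{2}(2\beta)^{k/2} k\,\Gamma(k/2)$ after a Gamma-integral substitution. These moment bounds are then fed into the Taylor series $\E e^{sX} = 1 + \sum_{k\ge 2} s^k \E X^k/k!$, where the mean-zero assumption kills the $k=1$ term. Splitting the sum into even and odd indices and using $\Gamma(k+1/2)\le k!$ and $(2k)!\ge 2^k (k!)^2$ collapses everything to $e^{C\beta s^2/2} + \sqrt{C\beta s^2}\,(e^{C\beta s^2/2}-1)$, and the closing step is the single elementary inequality $(e^x-1)(e^x-\sqrt{x})\ge 0$. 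This delivers the constant $2C$ uniformly in $s$, with the hypothesis $C\ge 2$ entering only to allow $\tfrac{C}{2}\cdot 2 \le C$ when merging the even and odd sums. In short, your Taylor-expansion backup plan \emph{is} the paper's argument, once you supply the missing moment bounds; the direct MGF tail-integral step turns out to be a detour.
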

\begin{proof}
For $k\in \N$, we have
\begin{align*}
\E |X|^k = \int_0^\infty \P\left( |X|^k > t \right) dt 
\leq C \int_0^\infty \exp \left( - \frac{t^{2/k}}{2 \beta} \right) dt. 
\end{align*}
Changing coordinates to $u =t^{2/k}/(2 \beta)$ yields that the right hand side display equals
\begin{align*}
\frac{C}{2} (2\beta)^{k/2} k \int_0^\infty e^{-u} u^{k/2-1} du &=  \frac{C}{2} (2\beta)^{k/2} k \Gamma(k/2).
\end{align*}
By the dominated convergence theorem, $\E X=0$, and $C\geq 2$,
\begin{align*}
\E e^{s X} &= 1 + \underset{k=2}{\overset{\infty}{\sum}} \frac{s^k \E X^k}{k!} 
\leq 1 + \frac{C}{2} \underset{k=2}{\overset{\infty}{\sum}} \frac{(2\beta   s^2)^{k/2}  \Gamma(k/2)}{(k-1)!} \\
&\leq 1 +  \underset{k=1}{\overset{\infty}{\sum}} \left[ \frac{(C\beta   s^2)^{k}  \Gamma(k)}{(2k-1)!} +  \frac{(C\beta s^2)^{k+ 1/2}  \Gamma(k+1/2)}{(2k)!} \right].
\end{align*}
Since $\Gamma(k+1/2) \leq \Gamma (k+1) = k \Gamma(k) = k!$ and $(2k)! \geq 2^k(k!)^2$, the latter is further bounded by
\begin{align*}
1 + \left(1 + \sqrt{{C}\beta  s^2} \right) \underset{k=1}{\overset{\infty}{\sum}} \frac{ (C\beta  s^2/2)^{k}  }{ k!} = e^{{C\beta  s^2/2}} + {\sqrt{C\beta s^2}} (e^{{\beta C s^2/2}}-1).
\end{align*}
Since $(e^x - 1) (e^x - \sqrt{x}) \geq 0$, we obtain that
\begin{equation*}
\E e^{sX} \leq e^{\frac{2C \beta  s^2}{2}}.
\end{equation*}
\end{proof}

The following lemma is a well known result and follows from standard calculus, but we included it as we did not find a stand-alone proof. 

\begin{lemma}\label{subgaussianity_lemma}
Let $Z$ be $N(0,1)$, $0 \leq \lambda \leq 1/4$. Then, 
\begin{equation*}
\E e^{\lambda (Z^2-1)} \leq e^{2\lambda^2}.
\end{equation*}
\end{lemma}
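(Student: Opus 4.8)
The plan is to reduce the claim to an elementary one-variable inequality. First I would recall the explicit moment generating function of the square of a standard normal: completing the square in the Gaussian integral gives $\E e^{\lambda Z^2} = (1-2\lambda)^{-1/2}$ for every $\lambda < 1/2$, and in particular for $0 \le \lambda \le 1/4$. Hence $\E e^{\lambda(Z^2-1)} = e^{-\lambda}(1-2\lambda)^{-1/2}$, and, taking logarithms, it suffices to prove that
\[
-\lambda - \tfrac12\log(1-2\lambda) \le 2\lambda^2 \qquad \text{for } 0 \le \lambda \le 1/4.
\]

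Next I would expand the logarithm as a power series. Since $\lambda \le 1/4$ gives $0 \le 2\lambda \le 1/2 < 1$, the expansion $-\tfrac12\log(1-2\lambda) = \sum_{k\ge 1}\frac{(2\lambda)^k}{2k} = \lambda + \sum_{k\ge 2}\frac{(2\lambda)^k}{2k}$ is valid, so the left-hand side above equals $\sum_{k\ge 2}\frac{(2\lambda)^k}{2k}$. Bounding $2k \ge 4$ in each term and summing the resulting geometric series yields
\[
\sum_{k\ge 2}\frac{(2\lambda)^k}{2k} \le \frac14\sum_{k\ge 2}(2\lambda)^k = \frac{\lambda^2}{1-2\lambda}.
\]
Finally, $\lambda \le 1/4$ forces $1-2\lambda \ge 1/2$, so the right-hand side is at most $2\lambda^2$, which is exactly the inequality we needed.

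The argument is entirely routine; the only point requiring a moment's care is the convergence condition $2\lambda < 1$ that legitimizes both the moment generating function formula and the series manipulation, and this is guaranteed by the hypothesis $\lambda \le 1/4$ (with room to spare). As an alternative one could avoid series altogether and verify directly that $g(\lambda) := 2\lambda^2 + \lambda + \tfrac12\log(1-2\lambda)$ satisfies $g(0)=0$ and $g'(\lambda) \ge 0$ on $[0,1/4]$, but the geometric-series bound above is the most transparent route and is the one I would present.
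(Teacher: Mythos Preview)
Your proof is correct and follows essentially the same approach as the paper: compute the moment generating function $\E e^{\lambda(Z^2-1)} = e^{-\lambda}(1-2\lambda)^{-1/2}$ explicitly, expand $-\tfrac12\log(1-2\lambda)$ as a power series, and bound the tail by a geometric series using $1-2\lambda \ge 1/2$. The only cosmetic difference is that the paper isolates the $k=2$ term before bounding the tail from $k\ge 3$, whereas you bound the whole tail from $k\ge 2$ at once via $2k\ge 4$; your organization is slightly cleaner but the idea is identical.
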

\begin{proof}
Using the change of variables $u = z \sqrt{1- 2\lambda}$,
\begin{align*}
\E e^{\lambda (Z^2-1)} &= \frac{1}{\sqrt{2\pi}} \int e^{\lambda (z^2 - 1)} e^{-\frac{1}{2} z^2} dz  \\
 &= \frac{e^{-\lambda}}{\sqrt{2\pi(1-2\lambda)}} \int  e^{-\frac{1}{2} z^2} dz = \frac{e^{-\lambda}}{\sqrt{(1-2\lambda)}}.
\end{align*}
The MacLaurin series of $-\frac{1}{2} \log (1 - 2\lambda)$ reads
\begin{equation*}
 \frac{1}{2} \underset{k=1}{\overset{\infty}{\sum}} \frac{(2\lambda)^k}{k}, 
\end{equation*}
which yields that the second last display equals
\begin{equation*}
\exp \left(\frac{3}{2} \lambda^2 + \frac{1}{2} \underset{k=3}{\overset{\infty}{\sum}} \frac{(2\lambda)^k}{k} \right).
\end{equation*}
If $\lambda \leq 1/4$,
\begin{equation*}
\underset{k=3}{\overset{\infty}{\sum}} \frac{(2\lambda)^k}{k} \leq \frac{(2\lambda)^3}{1-2\lambda} \leq \lambda^2,
\end{equation*}
from which the result follows.
\end{proof}

The next lemma is a standard bound for the KL-divergence, see for instance Lemma 2.7 of \cite{tsybakov:2009}.
   
\begin{lemma}\label{chi_sq_div_bounds_KL}
Let $P,Q$ probability measures on some measure space such that $Q\ll P$. Then,
\begin{equation*}
D_{KL}(P\|Q) \leq \int \left(\frac{dP}{dQ} - 1 \right)^2 dQ.
\end{equation*}
\end{lemma}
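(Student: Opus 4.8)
The plan is to reduce the inequality to the elementary bound $\log t \leq t-1$, valid for all $t>0$. As a first step I would note that for the statement to be meaningful the relevant absolute continuity is $P\ll Q$ (so that $f:=dP/dQ$ exists and $D_{KL}(P\|Q)=\int\log f\,dP=\int f\log f\,dQ$), and I take $f$ to be a nonnegative version of the Radon–Nikodym derivative. If $\int(dP/dQ-1)^2\,dQ=+\infty$ there is nothing to prove, so I may assume the right-hand side is finite, i.e.\ $f\in L^2(Q)$.

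The key algebraic observation is that, since $P$ and $Q$ are both probability measures, $\int (f-1)\,dQ=\int f\,dQ-\int 1\,dQ=1-1=0$. Using this after expanding the square,
\[
\int \Big(\frac{dP}{dQ}-1\Big)^2\,dQ=\int (f-1)^2\,dQ=\int f(f-1)\,dQ-\int (f-1)\,dQ=\int f(f-1)\,dQ .
\]
Hence it suffices to establish $\int f\log f\,dQ\leq\int f(f-1)\,dQ$.

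For the final step I would apply $\log t\leq t-1$ pointwise with $t=f$, giving $\log f\leq f-1$, and then multiply by $f\geq 0$ to obtain $f\log f\leq f(f-1)$ $Q$-almost everywhere (on $\{f=0\}$ one uses the convention $f\log f=0$, where the inequality is trivial). Integrating against $Q$ and combining with the identity above yields
\[
D_{KL}(P\|Q)=\int f\log f\,dQ\leq\int f(f-1)\,dQ=\int \Big(\frac{dP}{dQ}-1\Big)^2\,dQ ,
\]
which is the claim. There is essentially no obstacle: the only points requiring minor care are the convention on $\{f=0\}$ and checking integrability of $f\log f$ (it is bounded below by $-1/e$ and dominated above by $f(f-1)=f^2-f$, which lies in $L^1(Q)$ once $f\in L^2(Q)$), so that $D_{KL}(P\|Q)$ is finite whenever the right-hand side is.
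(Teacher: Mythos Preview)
Your proof is correct and is in fact the standard argument for this inequality (the chi-square divergence dominating the KL divergence). The paper does not supply its own proof of this lemma; it simply cites Lemma~2.7 of Tsybakov's book, so there is nothing to compare against beyond noting that your argument is precisely the classical one underlying that reference. Your remark that the absolute continuity hypothesis should really be $P\ll Q$ (rather than $Q\ll P$ as written) is well taken---without it $dP/dQ$ and $D_{KL}(P\|Q)$ are not defined in the usual sense---and the integrability check at the end is a nice touch.
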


\bibliographystyle{acm}
\bibliography{references}

\end{document}